\documentclass[12pt]{article}

\usepackage[colorlinks=false]{hyperref}
\usepackage{amssymb,amsthm}
\usepackage{amsmath,amsfonts}
\usepackage{xcolor}
\usepackage[latin1]{inputenc}

\makeatletter
\def\@seccntformat#1{\csname the#1\endcsname.\ } % the column after a section number
\makeatother

\newtheorem{lemma}{Lemma}
\newtheorem{theorem}{Theorem}
\newtheorem{proposition}{Proposition}
\newtheorem{conjecture}{Conjecture}
\newtheorem{corollary}{Corollary}
\theoremstyle{remark}
\newtheorem{remark}{\bf Remark}
\newtheorem{example}{Example}
\theoremstyle{definition}
\newtheorem{definition}{\bf Definition}

\newcommand{\bbF}{\mathbb{F}}
\newcommand{\J}{\mathcal{J}}

\newcommand{\Z}{\mathbb{Z}}

\newcommand{\V}{\mathcal{V}}
\newcommand{\C}{\mathcal{C}}
\newcommand{\A}{\mathcal{A}}
\newcommand{\Aa}{\mathcal{A}'_q}
\newcommand{\Ab}{\mathcal{A}''_q}
\newcommand{\N}{N}%
\newcommand{\vc}[1]{{\bar{#1}}}

\DeclareMathOperator{\wt}{wt}
\DeclareMathOperator{\supp}{supp}

\newcommand\param[5][\cdot]{(#2,#1,#4;#3)_{#5}}
% \newcommand\param[5][]{(#2,#4,#3\ifx&#1&\else{;#1}\fi)_{#5}}

%opening
\title{A family of diameter perfect constant-weight codes from Steiner systems%
\thanks{This is the accepted version of the paper published in Journal of Combinatorial Theory, Series~A, 200 (2023), 105790, \url{https://doi.org/10.1016/j.jcta.2023.105790}.
\\
This research is supported in part by National Natural Science Foundation of China (12071001);
the work
of D.\,S.\,Krotov is supported within the framework of the state contract of the Sobolev Institute of
Mathematics (Project FWNF-2022-0017).}}

\author{Minjia~Shi%
\thanks{Key Laboratory of Intel ligent Computing Signal Processing, Ministry of
Education, School of Mathematical Sciences, Anhui University, Hefei 230601,
China; State Key Laboratory of integrated Service Networks, Xidian University, Xian,
710071, China}%
,
Yuhong~Xia%
\thanks{School of Mathematical Sciences, Anhui
University, Hefei 230601, China.}%
,
Denis~S.~Krotov%
\thanks{Sobolev Institute of Mathematics, Novosibirsk 630090, Russia.}
}
\date{}
\begin{document}
\maketitle
\begin{abstract}
If $S$ is a transitive metric space, then $|C|\cdot|A| \le |S|$ for any distance-$d$ code $C$ and a set $A$, ``anticode'', of diameter less than $d$. For every Steiner S$(t,k,n)$ system $S$, we show the existence of a $q$-ary constant-weight code $C$ of length~$n$, weight~$k$ (or $n-k$), and distance $d=2k-t+1$ (respectively, $d=n-t+1$) and an anticode $A$ of diameter $d-1$ such that the pair $(C,A)$ attains the code--anticode bound and the supports of the codewords of $C$ are the blocks of $S$ (respectively, the complements of the blocks of $S$). We study the problem of estimating the minimum value of $q$ for which such a code exists, and find that minimum for small values of $t$.
\end{abstract}

{\bf Keywords:} diameter perfect codes, anticodes, constant-weight codes, code--anticode bound, Steiner systems.

\section{Introduction}

Diameter perfect codes are natural generalizations of perfect codes,
fascinating structures in coding theory.
They can be defined for different metrics, especially, for the Hamming metric.
In this paper, we discuss diameter perfect codes in the subspaces of
the Hamming metric space restricted by the words whose weight is constant.

The concept of a diameter perfect codes~\cite{AhlAydKha} is based on the
the code--anticode bound,
which is a generalization of the sphere-packing bound;
the most famous referenced result with that bound
is Delsarte's~\cite[Theorem~3.9]{Delsarte:1973}.
However, the main importance of \cite[Theorem~3.9]{Delsarte:1973}
is that the code--anticode bound is proved there
for any distance-regular graph
(more generally for an association scheme), without any assumption
on its symmetries.
For symmetric metric spaces, such as Hamming and Johnson schemes,
the bound has a much simpler proof, and probably was known earlier
(in particular, the key lemma in the proof of
the Elias--Bassalygo bound~\cite{Bassalygo:65}
is based on the same generalized pigeonhole principle).

\begin{lemma}[code--anticode bound]\label{L1}
Let $\V$ be a finite metric space
whose automorphism group $\mathrm{Iso}(\V)$ acts transitively on its points.
Let $\C$ be a subset of $\V$ (a \emph{code})
with minimum distance $d$ between elements,
and let $\A$ be a subset of $\V$ of diameter $D$
(an \emph{anticode}), where $D<d$. Then
\begin{equation}\label{eq:CAB}
|\C|\cdot |\A|\le |\V|.
\end{equation}
\end{lemma}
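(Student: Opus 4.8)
The plan is to exploit the transitivity of $G := \mathrm{Iso}(\V)$ by sliding the anticode $\A$ around the space and counting, via a double-counting (averaging) argument, how often codewords of $\C$ fall inside the translates. Since every isometry preserves distances, each translate $g(\A)$ with $g \in G$ is again a set of diameter exactly $D$. The first key observation is then a pigeonhole step: because $D < d$, no translate $g(\A)$ can contain two distinct codewords, for if $c_1, c_2 \in \C \cap g(\A)$ with $c_1 \ne c_2$, then $d(c_1,c_2) \le D < d$, contradicting the minimum-distance hypothesis. Hence $|\C \cap g(\A)| \le 1$ for every $g \in G$, and summing over the group gives $\sum_{g \in G} |\C \cap g(\A)| \le |G|$.

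For the matching evaluation I would compute the same sum by exchanging the order of summation:
\begin{equation*}
\sum_{g \in G} |\C \cap g(\A)| = \sum_{c \in \C} \bigl|\{\, g \in G : c \in g(\A) \,\}\bigr| = \sum_{c \in \C} \sum_{a \in \A} \bigl|\{\, g \in G : g(a) = c \,\}\bigr|.
\end{equation*}
Here the transitivity of $G$ enters decisively: for fixed points $a, c \in \V$, the set $\{g : g(a) = c\}$ is either empty or a coset of the stabilizer of $a$, and since the orbit of $a$ is all of $\V$, that stabilizer has index $|\V|$ in $G$. Thus, by transitivity, each inner term equals $|G|/|\V|$ independently of $a$ and $c$, and the double sum collapses to $|\C|\cdot|\A|\cdot|G|/|\V|$.

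Combining the two computations yields $|\C|\cdot|\A|\cdot|G|/|\V| \le |G|$, and cancelling $|G|$ gives the claimed inequality $|\C|\cdot|\A| \le |\V|$. The main point to verify carefully is the uniformity step: transitivity guarantees not merely that every point is reachable from $a$, but that all point-stabilizers have the same size $|G|/|\V|$, so that the count $|\{g : g(a)=c\}|$ is a genuine constant that can be pulled out of the double sum. The pigeonhole bound itself is routine once one notes that isometries preserve diameters; it is the averaging identity where the hypothesis on $\mathrm{Iso}(\V)$ is genuinely used, and that is where I would concentrate the write-up.
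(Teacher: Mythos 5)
Your proof is correct and follows essentially the same averaging/pigeonhole argument as the paper: both bound the number of incidences between codewords and isometric images of $\A$ from above (at most one codeword per image) and evaluate it exactly by transitivity. The only cosmetic difference is that you sum over all group elements and invoke orbit--stabilizer to get the constant $|G|/|\V|$, whereas the paper works with the set of distinct translates $\overline{A}$ and asserts that each point lies in the same number of them; your version makes the uniformity step slightly more explicit, but it is the same proof.
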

\begin{proof}
Denote
$\overline A = \{\pi(\A):\
\pi \in \mathrm{Iso}(\V) \}$.
Because of the transitivity, each point
of $\V$ belongs to exactly
$s$ sets in $\overline A$,
where $s = |\overline A|\cdot |\A| \,/\, |\V|$.
From the hypothesis on the diameter of
$\A$ and minimum distance of $\C$,
we see that each set from
$\overline A$ contains at most one
element of $\C$.
On the other hand, each element of $\C$
is contained in $s$ elements of
$\overline A$. Therefore,
$|\overline A|  \ge s |\C|$.
Substituting $s = |\overline A|\cdot |\A| \,/\, |\V|$, we get~\eqref{eq:CAB}.
\end{proof}

If \eqref{eq:CAB} holds with equality,
then $\C$ is called a
\emph{diameter perfect distance-$d$ code}
or \emph{$D$-diameter perfect code}.
It is easy to see that in this case
$\A$ is a
{maximum diameter-$D$ anticode}.
 The space $\J_q(n,w)$ of weight-$w$
 words of length $n$ over the
 $q$-ary alphabet, as was shown in~\cite{Etzion:2022:CW}, belongs
 to the class of spaces for which the code--anticode bound
 holds.
In~\cite{Etzion:2022:CW} (see also \cite[Ch.\,9]{Etzion:PCbook}),
Etzion classified known families
of non-binary diameter
perfect constant-weight codes
as follows (below, $q$ is the alphabet size,
$n$ is the length, $w$ is the weight and $d$
is the minimum distance):
\begin{itemize}
 \item[F1] Non-binary diameter-perfect constant-weight codes for
which $w = n$. Essentially, they are diameter-perfect codes in the Hamming space $\mathcal{H}(q-1,n)$
(perfect codes, extended perfect codes, MDS codes).
 \item[F2] Non-binary diameter perfect constant-weight codes for
which $w = n-1$. Known constructions \cite{EtzionVLint:2001}, \cite{Kro:2001:Ternary}, \cite{Kro:TernDiamPerf},
\cite{KOP:2016},
\cite{Svanstr:1999}, \cite{vLinTol:1999} are related to perfect codes
in the Hamming space~$\mathcal{H}(q-1,n-1)$, $q=2^m+1$, and their extensions, with one exception:
$q=3$, $n=w+1=6$, $d=4$~\cite{OstSva2002}.
 \item[F3] Generalized Steiner systems \cite{Etzion:97} (see also \cite{CJZ:2007:GSS} and references there), for which every codeword of weight $t$, $t<w$,
 is at distance $w-t$
 from exactly one codeword and $d=2(w-t)+1$ (see, e.g., \cite{ZhangGe:2013:4ary}, where $q=w=4$, $t=2$).
 \item[{F4\makebox[0mm][l] {,F6}}] \hspace{2em}$d=w$ or $d<w$, respectively.
 Codes of size $\binom{n}{w}(q-1)^{w-d+1}$,
 where all codewords of one of $\binom{n}{w}$
 possible supports form an unrestricted distance-$d$ MDS code of length $w$ over the $(q-1)$-ary alphabet
 $\{1,\ldots,q-1\}$.
 \item[F5] Codes of size $\binom{n}{w}$ with
 $d = w + 1$, where the supports of the codewords form the complete design
 (all $w$-subsets of $\{1,\ldots,n\}$).
\end{itemize}
Actually, family F5 also falls in the definition of F4 and F6 with $d=w+1$.
However, we keep it separate in the list because our goal is to extend it.
In this paper,
we further explore diameter perfect
constant-weight codes over non-binary
alphabets
and maximum anticodes corresponding to those codes.
We consider two families
of codes, F5' and F5'',
generalizing F5
(F5 corresponds to F5' with $t=w$ and to F5'' with $n-t=w$):
\begin{itemize}
 \item[F5'] Codes of size $\binom{n}{t}/\binom{w}{t}$
 with $d = 2w - t + 1$,
 where the supports of the codewords form a Steiner system $S(t,w,n)$,
 $w \le n/2$.
 \item[F5''] {Codes of size $\binom{n}{t}/\binom{n-w}{t}$ with
 $d = n - t + 1$, where the complements of the supports of the codewords
 form a Steiner system $S(t,n-w,n)$,
 $w \ge n/2$.}
\end{itemize}

It is notable that F5' and F5'' are related to the two known
families of binary diameter perfect constant-weight codes,
constructed from Steiner systems \cite[Sect.\,III]{Etzion:2022:CW}.
While in the binary case, the codes from the two families
are connected by a trivial operation (swapping the role
of $0$ and $1$), the codes from F5' and F5'' are essentially
different, except for the special case $w=n/2$.

It is worth noting that
diameter perfect codes form a proper subclass of
of the class of cardinality-optimal codes
(that is, codes with the maximum cardinality among all codes with the
same distance in the same space).
There are many constructions of optimal non-binary constant-weight codes,
most of which are not diameter perfect,
see, e.g.,
\cite{CDLL:2008:w3},
\cite{CDLL:2010:lin},
\cite{CGKLZZ:2019},
\cite{CGZZ:2015:Hanani},
\cite{CheeLing:2007:CW},
\cite{Dau:2008:MS},
\cite{Ge:2008},
\cite{KSZ:2023:CW3ary},
\cite{OstSva2002},
\cite{ZhangGe:2010},
\cite{ZZG:2012:3ary},
\cite{ZhangGe:2013:4ary},
\cite{ZhuGe:2012:*GDD}.
However, in the most of cases, a small alphabet or a small code
distance is considered,
while the codes we focus on have the minimum distance larger than the
weight and a relatively large alphabet size.
Intersections of the families of codes known before with family F5' (including F5) are mentioned in Section~\ref{s:23}
(Remarks~\ref{r:t2}, \ref{r:t3}, \ref{r:34}).

% In \cite{CDLL:2008:w3}, all parameters of optimal constant-weight codes
% of weight~$3$ and distance~$4$ are determined, including the parameters of diameter perfect codes of type F5 with $w=3$.

The structure of the paper is as follows.
In Section~\ref{section:A},
we give main definitions and basic results.
In Section~\ref{s:anticode},
we construct a class of anticodes
corresponding to families F5', F5'' and
show that they are maximum for large~$q$ (Theorems~\ref{th:anti'} and~\ref{th:anti''}),
also showing the same for a more general class
of anticodes (Proposition~\ref{max t}).
In Section~\ref{s:q0},
we observe that the existence of a Steiner
system (with the corresponding parameters)
implies the existence, for sufficiently
large~$q$,
of diameter perfect codes
that attain the code--anticode bound with the new
anticodes (Theorem~\ref{t1}), and define $q'_0$ and $q''_0$
as
the smallest possible $q$ (corresponding to families F5' and F5'', respectively) for which such codes exist.

In Section~\ref{s:23},
we find $q'_0$
in the case
of Steiner systems of strength~$2$
(Theorems~\ref{t2}) and~$3$
(Theorems~\ref{t3}, Corollaries~\ref{c:nw3} and~\ref{c:4mod12}).
Further, in Sections~\ref{s:t}
and~\ref{s:special},
we derive an upper bound
on $q'_0$ in the general case
(Theorems~\ref{t4})
and consider special cases related to the
Steiner systems $S(4,5,11)$
(Proposition~\ref{p:11,5,4}),
$S(3,4,14)$ and $S(3,4,20)$ (Proposition~\ref{p:n,4,3}).
In Section~\ref{s'},
we consider $q''_0$;
we derive a lower bound
(Proposition~\ref{q''}, Corollary~\ref{p:2''})
and solve the special cases related
to affine and projective planes
(Corollaries~\ref{c:affine} and ~\ref{c:proj}).
Section $5$ contains concluding remarks and conjectures.

% {\color{green}
% In~\cite{Etzion:2022:CW}, Etzion presented $12$ open problems, where in Problem $12$, we find a new maximum size
% of these $t$-intersecting families for parameters $d>w + 1$.
% }

\section{Preliminaries}\label{section:A}

In this section, we present definitions and notations, as well as several results to facilitate understanding throughout the paper.

By $\Z_q$,
we denote the set $\{0,\ldots, q-1\}$
(we do not need to endow it by any algebraic operation, such as addition or multiplication);
we will also use the notation
 $[n]$ for $\{1,\ldots,n\}$.
Given a positive integer $n$,
let $\Z_q^n$ be the set of all words of length~$n$ over the alphabet $\Z_q$; for each word $\vc x\in \Z_q^n$, we have $\vc x=(x_i)_{i\in [n]}$
and $x_i\in \Z_q$ for any $i\in [n]$.
The \emph{support} of a word $\vc x\in \Z_q^n$
is the set $\supp(\vc x)=\{i\in [n]:\ x_i \neq 0\}$.
The \emph{weight} of a word $\vc x\in \Z_q^n$, denoted by $\wt(\vc x)$, is equal to $|\supp(\vc x)|$.
The \emph{distance} (Hamming distance) $d(\vc x,\vc y)$ between two words
$\vc x$, $\vc y$ in $\Z_q^n$ is defined to be the number
of positions in which $\vc x$ and $\vc y$ differ.
By $\J_q(n,w)$,
we denote
the subset of $\Z_q^n$
that is restricted
by words of weight $w$,
and the corresponding
metric subspace
of the Hamming space.

\begin{definition}[codes]
Given a metric space, a \emph{code}
as a subset with at least two vertices, referred to as \emph{codewords}.
A code $\C$ is said to have \emph{distance~$d$}
if the distance between any two distinct codewords is not less than $d$.
A code $\C$ is a \emph{$q$-ary code of length $n$}
if $\C\subseteq \Z_q^n$.
If, additionally,
$\C \subseteq \J_q(n,w)$,
then it is a \emph{constant-weight},
or \emph{constant-weight-$w$}, code.
For brevity, we call
a $q$-ary constant-weight-$w$ code
of distance~$d$ an
$\param[|\C|]{n}{w}{d}{q}$ code,
or
an $\param{n}{w}{d}{q}$ code
if the cardinality
is not specified or important.
\end{definition}

\begin{definition}[anticodes and diameter perfect codes]
An \emph{anticode} of diameter $D$
in a metric space $\V$
is a nonempty subset $\A$ of $\V$
such that
the distance between any $\vc x$ and $\vc y$ in $\A$ is not greater than $D$.
A distance-$d$ code $\C$ in $\V$
is called \emph{diameter perfect}
if it attains the code--anticode bound~\eqref{eq:CAB}
for some anticode~$\A$ of diameter smaller
than~$d$.
\end{definition}

\begin{definition}[code matrix]
For a code $\C$ in $\Z_q^n$,
we define the \emph{code matrix} whose rows
are the codewords of $\C$.
To be explicit, we can require that the rows are ordered lexicographically;
however, we will never use this order in our considerations.
\end{definition}

\begin{definition}[Steiner systems]
A (Steiner) system $S(t,k,n)$,
$0<t\le k<n$,
is a pair $S = (\N , B)$, where $\N$ is an $n$-set (whose elements are often called \emph{points}) and $B$ is a set of $k$-subsets (called \emph{blocks}) of~$\N$, 
where each $t$-subset of~$\N$ is contained in exactly one block of~$B$.
Systems $S(2, 3, n)$ are also denoted STS$(n)$ and called \emph{Steiner triple systems}, \emph{STS}.
Systems $S(3, 4, n)$ are also denoted SQS$(n)$ and called \emph{Steiner quadruple systems}, \emph{SQS}.
\end{definition}

\begin{definition}[derived systems]
For a Steiner $S(t,k,n)$ system $S = ( \N , B)$
and a set of points
$\alpha \subset \N $,
$0<|\alpha|<t $,
the system
$( \N \backslash \alpha , \{ \beta \backslash \alpha:\  \alpha \subset \beta \in B \} )$
is a Steiner $S(t-|\alpha|, k-|\alpha|, n-|\alpha|)$ system, called \emph{derived} from $S$.
\end{definition}

\begin{definition}[resolvable systems]
A Steiner $S(t,k,n)$ system $S = ( \N , B)$
is called \emph{$l$-resolvable}, $1\le l < t$, (if $l=1$, just  \emph{resolvable})
if the set of blocks~$B$ can be partitioned into subsets each of which
forms an $S(l, k, n)$ on the same
point set $\N$.
If $l=1$, those subsets are referred to as \emph{parallel classes}.\end{definition}

\begin{definition}[minimum-distance graph and chromatic number]
The\linebreak[4] \emph{minimum-distance graph}
% (in some literature, \emph{line graph})
of an $S(t,k,n)$ is a graph whose vertices
are the blocks of the system,
with two vertices joined by an edge if and only if
the corresponding blocks
intersect in $t-1$ points.
For a Steiner system $S=(\N,B)$, we denote by $\chi(S)$
the chromatic number of its minimum-distance graph
(i.e. the minimum number of cells
in a partition of $B$ such that
two distinct blocks from the same cell intersect in at most $t-2$ points).
\end{definition}

In the rest of this section,
we list some known facts in the form of lemmas,
to be utilized in the proofs of our results.

\begin{lemma}[see, e.g.,
\cite{ColMat:Steiner}] \label{l2}
The number of blocks in a Steiner system $S(t,k,n)$ is
${\binom{n}{t}}/{\binom{k}{t}}$
and every point in $\N$ is contained in ${\binom{n-1}{t-1}}/{\binom{k-1}{t-1}}$ blocks.
\end{lemma}

\begin{lemma}[{\cite[Sect.\,5.4]{ColMat:Steiner}}]
\label{l:Pi}
For a Steiner system $S(t,k,n)$,
the number $\Lambda_i(t,k,n)$ of blocks that intersect a fixed
block $B$ in exactly $i$
elements does not depend on the choice of
$B$ and equals
$\binom{k}{i} \lambda_{i,k-i} $
where $\lambda_{i,j}$, $0  \le i + j \le k$,
are recursively defined  as follows:
$$
\begin{array}{rcll}
  \lambda_{i,0} &=&
    \binom{n-i}{t-i} \big/ \binom{k-i}{t-i},
    &
    0 \le i \le t,
    \\
\lambda_{i,0} &=&	1,
    &
    t < i \le k,\\
\lambda_{i,j} &=&
\lambda_{i,j-1}-\lambda_{i+1,j-1},
    &
    0  \le i< i + j \le k.
\end{array}
$$
\end{lemma}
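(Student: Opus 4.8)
The plan is to give a combinatorial meaning to the auxiliary quantities $\lambda_{i,j}$ and then assemble $\Lambda_i$ from them. Fix the block $B$, and for disjoint subsets $I,J\subseteq B$ with $|I|=i$ and $|J|=j$, let $\lambda_{i,j}$ denote the number of blocks $C$ of the system with $I\subseteq C$ and $C\cap J=\emptyset$ (containing all of $I$, avoiding all of $J$). The first task is to verify that this number depends only on $i$ and $j$ (and on $t,k,n$), not on the particular sets $I,J$ nor on $B$; I would establish this by induction on $j$, carried out simultaneously with the recursion.

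For the base case $j=0$, the quantity $\lambda_{i,0}$ is the number of blocks through a fixed $i$-subset $I$ of $B$. For $0\le i\le t$ this is the generalized block count $\binom{n-i}{t-i}\big/\binom{k-i}{t-i}$: when $i<t$ it follows from Lemma~\ref{l2} applied to the Steiner system derived from $I$ (reducing to Lemma~\ref{l2} itself when $i=0$), and when $i=t$ it is the unique block guaranteed by the definition of a Steiner system, matching $\binom{n-t}{0}\big/\binom{k-t}{0}=1$. For $t<i\le k$, every $t$-subset of $I$ lies in a unique block, which must be $B$ since $B\supseteq I$; hence $B$ is the only block through $I$ and $\lambda_{i,0}=1$. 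In all cases the value is independent of the choice of $I$.

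For the recursion, fix disjoint $I,J\subseteq B$ with $|I|=i$ and $|J|=j\ge 1$, and pick a point $p\in J$. The blocks counted by $\lambda_{i,j-1}$ (those containing $I$ and avoiding $J\setminus\{p\}$) split according to whether they contain $p$: those avoiding $p$ are exactly the blocks containing $I$ and avoiding all of $J$, giving $\lambda_{i,j}$, while those containing $p$ are exactly the blocks containing $I\cup\{p\}$ and avoiding $J\setminus\{p\}$, giving $\lambda_{i+1,j-1}$. Hence $\lambda_{i,j}=\lambda_{i,j-1}-\lambda_{i+1,j-1}$; as the two indices $j-1$ on the right are covered by the inductive hypothesis, the right-hand side is independent of $I,J,p$, which simultaneously proves the recursion and the well-definedness of $\lambda_{i,j}$.

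Finally, to count $\Lambda_i$ I would partition the blocks $C$ with $|C\cap B|=i$ according to which $i$-subset of $B$ equals $C\cap B$. For a fixed $i$-subset $I\subseteq B$, the blocks $C$ with $C\cap B=I$ are precisely those containing $I$ and disjoint from the $(k-i)$-set $B\setminus I$, of which there are $\lambda_{i,k-i}$. Summing over the $\binom{k}{i}$ choices of $I$ yields $\Lambda_i=\binom{k}{i}\lambda_{i,k-i}$, and independence from the choice of $B$ is then immediate since every term depends only on $i$, $j$, and the system parameters. The only mildly delicate point is the simultaneous induction securing well-definedness of $\lambda_{i,j}$, but the inclusion–exclusion step above disposes of it cleanly; everything else is routine counting.
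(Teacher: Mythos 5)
Your proof is correct. Note, however, that the paper offers no proof of this lemma at all: it is quoted as a known fact with a citation to the Handbook of Combinatorial Designs (Colbourn--Mathon, Sect.~5.4), so there is no in-paper argument to compare against. Your argument is the standard one for these intersection numbers and is complete: the interpretation of $\lambda_{i,j}$ as the number of blocks containing a fixed $i$-subset $I$ of $B$ and avoiding a disjoint $j$-subset $J\subseteq B$, the base case via Lemma~\ref{l2} applied to derived systems (with the cases $i=0$ and $t\le i\le k$ correctly handled outside the derived-system definition, which requires $0<|I|<t$), the splitting on a point $p\in J$ giving the recursion together with well-definedness by induction on $j$, and the final partition of the blocks meeting $B$ in exactly $i$ points according to the $i$-set $C\cap B$, yielding $\Lambda_i=\binom{k}{i}\lambda_{i,k-i}$. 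The only cosmetic remark is that the recursion step is complementary counting rather than inclusion--exclusion proper, but nothing in the argument depends on that label.
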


% In one of our proofs,
% we will use the following
% well-known fact.
\begin{lemma}\label{l:KB}
 Assume positive integers $m$ and $s$ are given,
 and we consider representations of
 $m$ as the sum of $s$ nonnegative integers:
 $m=m_1+\ldots+m_s$.
 The value $\sum_i\binom{m_i}{2}$ reaches its minimum
 if and only if $m_1,\ldots,m_s \in \{a,a+1\}$, where
 $a=\lfloor m/s \rfloor$.
\end{lemma}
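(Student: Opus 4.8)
The plan is to argue by the standard convexity (smoothing) technique, exploiting that $f(x)=\binom{x}{2}$ is strictly convex on the nonnegative integers. The engine of the whole argument is a single computation: the effect of transferring one unit between two parts. Suppose $m_i\ge m_j$ and we replace the pair $(m_i,m_j)$ by $(m_i-1,m_j+1)$ (leaving the other parts, and the total $m$, unchanged). Using the identity $\binom{x}{2}-\binom{x-1}{2}=x-1$, the objective changes by
\[
\Bigl(\binom{m_i-1}{2}-\binom{m_i}{2}\Bigr)
+\Bigl(\binom{m_j+1}{2}-\binom{m_j}{2}\Bigr)
= -(m_i-1)+m_j = m_j-m_i+1 .
\]
Hence this transfer \emph{strictly} decreases $\sum_i\binom{m_i}{2}$ exactly when $m_i-m_j\ge 2$, and leaves it unchanged when $m_i-m_j=1$.

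From this I would extract the ``only if'' direction immediately: if a representation minimizes the sum, then no two of its parts can differ by $2$ or more, since otherwise the transfer above would yield a strictly smaller value, contradicting minimality. Therefore all parts of a minimizer lie in some set $\{a,a+1\}$ of two consecutive integers. The constraint $m_1+\cdots+m_s=m$ then pins down $a$: writing $r$ for the number of parts equal to $a+1$ (so $s-r$ parts equal $a$), we get $m=sa+r$ with $0\le r<s$, which is precisely division with remainder of $m$ by $s$; thus $a=\lfloor m/s\rfloor$ and $r=m-s\lfloor m/s\rfloor$.

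For the ``if'' direction and to complete the equivalence, I would observe that once $a=\lfloor m/s\rfloor$ is fixed, any two representations with all parts in $\{a,a+1\}$ are permutations of one another (each has exactly $r=m-sa$ parts equal to $a+1$), hence share the same value of $\sum_i\binom{m_i}{2}$. Since the previous paragraph shows that at least one such representation is a global minimizer, \emph{all} of them are, and these are exactly the minimizers. I do not anticipate a genuine obstacle here; the only point needing a little care is the biconditional bookkeeping, i.e.\ confirming that the balanced configurations form a single permutation class and therefore all attain the common minimal value. As an equivalent route one could instead run the transfer operation to termination: the objective is a nonnegative integer that strictly drops at each nontrivial step, so the process halts, and the terminal states are precisely those with every pairwise gap at most $1$.
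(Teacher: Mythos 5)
Your proposal is correct and uses essentially the same smoothing argument as the paper: if two parts differ by at least $2$, transferring one unit strictly decreases $\sum_i\binom{m_i}{2}$, so any minimizer has all parts in $\{a,a+1\}$ with $a=\lfloor m/s\rfloor$ forced by the total. You additionally spell out the ``if'' direction (all balanced configurations are permutations of one another, hence all minimal), which the paper leaves implicit; this is a welcome completion rather than a different route.
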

\begin{proof}
Otherwise, $m_l\le m_j+2$ for some $l$, $j$.
In this case,
obviously,
$\binom{m_l}{2}+\binom{m_j}{2}>\binom{m_l+1}{2}+\binom{m_j-1}{2}$,
and $\sum\binom{m_i}{2}$ is not as small as possible.
\end{proof}

\section{Anticodes}\label{s:anticode}
Generalizing the anticode $\A^{\mathrm m}( n, w, w)$
from~\cite[IV.DE]{Etzion:2022:CW},
we define these two sets:
\begin{multline*}
 \Aa( n, w, t) =
 \{(a_{1},\ldots,a_{t},a_{t+1},\ldots,a_{n})\in \Z_{q}^n: \\
 \wt((a_{1},\ldots,a_{t}))=t,\ \wt((a_{t+1},\ldots,a_{n}))=w-t \},
\end{multline*}
where $t \le w$ and $2w -t \le n$, and
\begin{multline*}
 \Ab( n, w, t) =
 \{(0,\ldots,0,a_{t+1},\ldots,a_{n})\in \Z_{q}^n:\
 \wt((a_{t+1},\ldots,a_{n}))=w \},
\end{multline*}
where $t \le n - w$ and $2w +t \ge n$.
The main theorem in this section
establishes that
$\Aa(n,w,t)$ and $\Ab(n,w,t)$
are maximum anticodes in $\J_q(n,w)$,
assuming some conditions on the parameters hold.
\begin{theorem}\label{th:anti'}
The set $\Aa( n, w, t)$,
$t\le 2w-t \le n$, $q\ge 3$,
satisfies the following:
\begin{enumerate}
 \item[\rm(1')] $\Aa( n, w, t)$ has $\binom{n-t}{w-t}(q-1)^{w}$ words;
 \item[\rm(2')] $\Aa( n, w, t)$
 is a diameter-$(2w-t)$ anticode;
 \item[\rm(3')] if $n\ge(w-t+1)(t+1)$
 and $q$ is large enough,
 then $\Aa( n, w, t)$ is a maximum
 diameter-$(2w-t)$ anticode in $\J_q(n,w)$.
\end{enumerate}
\end{theorem}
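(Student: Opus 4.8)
The claims (1') and~(2') are direct. For (1'), each of the first $t$ coordinates carries one of the $q-1$ nonzero symbols, while the remaining weight $w-t$ is placed on the last $n-t$ coordinates in $\binom{n-t}{w-t}$ ways, each chosen coordinate again carrying a nonzero symbol; this gives $\binom{n-t}{w-t}(q-1)^{w}$ words. For~(2'), take $\vc x,\vc y\in\Aa(n,w,t)$ and let $S_x,S_y\subseteq\{t+1,\dots,n\}$ be the parts of their supports in the last $n-t$ coordinates, so $|S_x|=|S_y|=w-t$. Splitting the mismatches into the first $t$ positions, the symmetric difference $S_x\triangle S_y$, and the common part $S_x\cap S_y$ gives
\begin{equation*}
 d(\vc x,\vc y)\le t+|S_x\triangle S_y|+|S_x\cap S_y|=2w-t-|S_x\cap S_y|\le 2w-t,
\end{equation*}
and equality is attained by taking $S_x\cap S_y=\varnothing$ (possible since $2(w-t)\le n-t$) together with pairwise distinct nonzero symbols in the first $t$ positions (possible since $q\ge 3$); hence the diameter equals $2w-t$.

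For~(3') it is worth noting first that the cleanest certificate of maximality would be the code--anticode bound itself: by Lemma~\ref{L1}, any distance-$(2w-t+1)$ code $\C\subseteq\J_q(n,w)$ obeys $|\C|\cdot|\A|\le|\J_q(n,w)|=\binom{n}{w}(q-1)^w$, so a code of the family F5', of size $\binom{n}{t}/\binom{w}{t}$, would immediately force $|\A|\le\binom{n-t}{w-t}(q-1)^w$ for \emph{every} diameter-$(2w-t)$ anticode. Since~(3') assumes no Steiner system $S(t,w,n)$, and hence no such code, a direct argument is required. I would let $\A$ be an arbitrary diameter-$(2w-t)$ anticode, partition it by the support of its words as $\A=\bigsqcup_{\sigma}\A_\sigma$ over $w$-subsets $\sigma\subseteq[n]$, and note $|\A|=\sum_\sigma|\A_\sigma|$ with the trivial bound $|\A_\sigma|\le(q-1)^w$. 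The distance estimate above shows that for two occupied supports with $|\sigma\cap\tau|=c$ one has $d(\vc x,\vc y)\ge 2w-2c$, so the diameter bound forces $c\ge\lceil t/2\rceil$; moreover, when $c<t$ it forces every pair $\vc x\in\A_\sigma$, $\vc y\in\A_\tau$ to carry equal symbols on at least $t-c$ of the $c$ common coordinates.

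This last condition is what makes large $q$ decisive. Projecting $\A_\sigma$ onto the $c$ common coordinates (where its words are nonzero), every projection must agree with a fixed projection of a word of $\A_\tau$ in at least $t-c\ge 1$ positions; a single such fixed vector already excludes all $(q-2)^c$ projections that differ from it everywhere, so the projection of $\A_\sigma$ has at most $\sum_{j\ge t-c}\binom{c}{j}(q-2)^{c-j}=O(q^{2c-t})$ values, whence $|\A_\sigma|=O\!\big(q^{\,w-(t-c)}\big)$. Thus, as soon as a support meets some other occupied support in fewer than $t$ points, its weight is of strictly lower order than $(q-1)^w$, so for large $q$ the supports carrying weight comparable to $(q-1)^w$ pairwise intersect in at least $t$ points, i.e.\ form a $t$-intersecting family of $w$-subsets of $[n]$. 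Here the hypothesis $n\ge(w-t+1)(t+1)$ is exactly the Frankl--Wilson threshold: such a family has at most $\binom{n-t}{w-t}$ members, with a ``star'' (all $w$-subsets through a fixed $t$-set) extremal, and uniquely so when the inequality on $n$ is strict. Combining the two estimates yields $|\A|\le\big(\binom{n-t}{w-t}+o(1)\big)(q-1)^w$.

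The main obstacle is to upgrade this asymptotic inequality to the exact bound $|\A|\le\binom{n-t}{w-t}(q-1)^w$ for all sufficiently large $q$, that is, to rule out that a star-sized heavy family together with many lower-order supports beats the construction. I expect this to be the technical heart of the argument, requiring a stability rather than merely extremal form of the intersection theorem: if the occupied supports deviate from a single star, then some would-be star member $\sigma$ is forced to share fewer than $t$ points with an occupied support, so $|\A_\sigma|$ drops by order $(q-1)^{w-1}$, and this deficit must be shown to outweigh the total lower-order gain $O(q^{w-1})$ contributed by all off-star supports. Making this trade-off precise---tracking how each admitted off-star word simultaneously depresses the admissible symbol sets on the star---should give exact maximality for $q$ above an explicit threshold, and an analogous accounting is what I would expect to control the more general anticodes treated later in this section.
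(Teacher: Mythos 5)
Your treatment of (1') and (2') is correct (the paper simply declares these claims obvious), and your setup for (3') is the right one: partition the anticode $\A$ by supports, show that any occupied support meeting another occupied support in $c<t$ points can carry only $O(q^{w-1})$ words, and control the remaining supports by the $t$-intersecting Erd\H{o}s--Ko--Rado bound under $n\ge(w-t+1)(t+1)$. But you stop at the asymptotic inequality $|\A|\le\bigl(\binom{n-t}{w-t}+o(1)\bigr)(q-1)^w$ and explicitly leave open the upgrade to the exact bound, guessing that a stability form of the intersection theorem is needed to rule out a near-star of heavy supports plus many light ones. That last step is a genuine gap, and the guess points in the wrong direction: no stability is required.

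The missing observation (which is how the paper closes the argument in Proposition~\ref{max t}, of which (3') is the specialization via Lemma~\ref{l:EKR}(i)) is that the extremal bound already applies to the \emph{good} supports together with one bad one. Let $B$ be the set of occupied supports, $B''$ the set of $\beta\in B$ with $|\beta\cap\alpha|\ge t$ for all $\alpha\in B$, and $B'=B\setminus B''$. If $B'\neq\varnothing$, pick any $\alpha\in B'$; every member of $B''$ meets \emph{every} occupied support, in particular $\alpha$, in at least $t$ points, so $B''\cup\{\alpha\}$ is itself a $t$-intersecting family and Lemma~\ref{l:EKR}(i) gives $|B''|\le\binom{n-t}{w-t}-1$. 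Hence $|\A| < \bigl(\binom{n-t}{w-t}-1\bigr)(q-1)^w + \binom{n}{w}\cdot O(q^{w-1})$, and the deficit of one full $(q-1)^w$ swallows the entire lower-order contribution of $B'$ once $q$ is large; if $B'=\varnothing$ the bound $|B|\le\binom{n-t}{w-t}$ is immediate. Your per-support estimate $|\A_\sigma|=O(q^{\,w-(t-c)})$ is fine --- in fact slightly sharper than the paper's bound $(q-1)^w-(q-2)^w$ obtained by excluding, for a fixed witness $\vc x$, all words differing from it everywhere on the common coordinates --- so with this one observation your argument closes, and the delicate star-versus-off-star accounting you anticipate never arises. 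The only remaining difference is organizational: the paper proves maximality once for an arbitrary maximum $s$-intersecting family and only then specializes to $\Aa(n,w,t)$.
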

\begin{theorem}\label{th:anti''}
The set $\Ab( n, w, t)$,
$t + w \le n \le 2w + t$, $q\ge 3$,
satisfies the following:
\begin{enumerate}
 \item[\rm(1'')] {$\Ab( n, w, t)$ is a diameter-$(n-t)$ anticode;}
 \item[\rm(2'')] {$\Ab( n, w, t)$
 has $\binom{n-t}{w}(q-1)^{w}$ words;}
 \item[\rm(3'')] {if $n\ge(n-w-t+1)(t+1)$ (i.e.,
 $n \le (w+t-1)(t+1)/t$)
 and $q$ is large enough,
 then $\Ab( n, w, t)$
 is a maximum diameter-$(n-t)$ anticode in $\J_q(n,w)$.}
\end{enumerate}
\end{theorem}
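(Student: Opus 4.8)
The plan is to prove (1'') and (2'') directly and then devote the bulk of the work to the maximality statement~(3''). For~(2''), a word of $\Ab(n,w,t)$ is specified by choosing which $w$ of the last $n-t$ coordinates are nonzero and assigning to each of them one of the $q-1$ nonzero symbols, so $|\Ab(n,w,t)|=\binom{n-t}{w}(q-1)^{w}$. For~(1''), any two words of $\Ab(n,w,t)$ agree on the first $t$ coordinates, which are all zero, so they differ in at most $n-t$ positions; hence the diameter is at most $n-t$, and it equals $n-t$ when $q\ge 3$ (make the two supports as disjoint as a block of length $n-t$ permits and use two distinct nonzero symbols on the coordinates where they are forced to overlap).

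Set $\tau:=2w-(n-t)$, which is nonnegative because $2w+t\ge n$. For weight-$w$ words $\vc x,\vc y$ with supports $T_{1},T_{2}$ and $a:=|T_{1}\cap T_{2}|$ we have $d(\vc x,\vc y)=2(w-a)+b$, where $b\le a$ counts the coordinates of $T_{1}\cap T_{2}$ carrying distinct nonzero symbols. Hence the anticode condition $d(\vc x,\vc y)\le n-t$ holds automatically when $a\ge\tau$, whereas for $a<\tau$ it forces $\vc x$ and $\vc y$ to coincide on at least $\tau-a$ of the common coordinates. Consequently, a family of supports that is pairwise $\tau$-intersecting (any two meeting in at least $\tau$ points) places no restriction on the symbols, so all $(q-1)^{w}$ words of each of its supports may be used simultaneously. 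The supports of $\Ab(n,w,t)$ are exactly the $w$-subsets of $\{t+1,\dots,n\}$: they avoid the fixed $t$-set $\{1,\dots,t\}$, pairwise meet in at least $2w-(n-t)=\tau$ points, and number $\binom{n-t}{w}$.

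Now let $\A$ be an arbitrary diameter-$(n-t)$ anticode, write $\A_{T}$ for the set of its words of support $T$, and call a support $T$ \emph{heavy} if $|\A_{T}|>q^{w-1/2}$ and \emph{light} otherwise. The crucial estimate is the following: if $\A_{T'}\neq\emptyset$ and $|T\cap T'|=a<\tau$, then, projecting the words of $\A_{T}$ onto the $a$ coordinates of $T\cap T'$ and noting that these projections all lie within Hamming distance $2a-\tau<a$ of the (fixed, nonzero) projection of any element of $\A_{T'}$, a ball-size count in $(\Z_{q}\setminus\{0\})^{a}$ gives $|\A_{T}|\le C\,q^{\,w-(\tau-a)}\le C\,q^{\,w-1}$ for a constant $C=C(n,w,t)$. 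For $q$ large this is below $q^{w-1/2}$, so a heavy support cannot intersect any nonempty support in fewer than $\tau$ points; in particular the heavy supports are pairwise $\tau$-intersecting and each is $\tau$-intersecting with every nonempty support. Let $M_{\tau}$ denote the maximum size of a $\tau$-intersecting family of $w$-subsets of $[n]$. If every nonempty support is heavy, then these supports form a $\tau$-intersecting family $S$ and $|\A|\le|S|(q-1)^{w}\le M_{\tau}(q-1)^{w}$. Otherwise there is a nonempty light support $T_{0}$; adjoining it to the heavy family keeps the family $\tau$-intersecting, so the number of heavy supports is at most $M_{\tau}-1$, whence $|\A|\le(M_{\tau}-1)(q-1)^{w}+\binom{n}{w}q^{w-1/2}<M_{\tau}(q-1)^{w}$ for $q$ large. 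In both cases $|\A|\le M_{\tau}(q-1)^{w}$.

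It remains to evaluate $M_{\tau}$. Complementation sends a $w$-subset $T$ to the $(n-w)$-subset $T^{c}$ and satisfies $|T_{1}\cap T_{2}|\ge\tau\iff|T_{1}^{c}\cap T_{2}^{c}|\ge t$ (since $|T_{1}^{c}\cap T_{2}^{c}|=n-2w+|T_{1}\cap T_{2}|$ and $n-2w+\tau=t$), so $\tau$-intersecting families of $w$-sets correspond bijectively to $t$-intersecting families of $(n-w)$-sets. By the Frankl--Ahlswede--Khachatrian theorem on $t$-intersecting families, the maximum size of the latter is $\binom{n-t}{(n-w)-t}=\binom{n-t}{w}$, attained by the family of all $(n-w)$-sets containing a fixed $t$-set, precisely when $n\ge(t+1)\bigl((n-w)-t+1\bigr)$, which is the hypothesis of~(3''). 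Therefore $M_{\tau}=\binom{n-t}{w}$ and $|\A|\le\binom{n-t}{w}(q-1)^{w}=|\Ab(n,w,t)|$, proving maximality. The principal difficulty is the quantitative projection/ball estimate, which is exactly what forces the lower bound on $q$; the extremal set-theoretic input supplies the precise count; and the heavy/light dichotomy is what upgrades the asymptotic comparison into the exact inequality $|\A|\le|\Ab(n,w,t)|$.
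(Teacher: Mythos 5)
Your proof is correct and follows essentially the same route as the paper: the paper first proves a general statement (Proposition~\ref{max t}) that, for a maximum $s$-intersecting family of supports, the set of all words with those supports is a maximum anticode for large $q$ --- splitting supports into those that $s$-intersect every occurring support (at most $|T|$ of them, or $|T|-1$ when some support fails the intersection condition) and the remaining ones, each of which carries $O(q^{w-1})$ words (shown by a forbidden-word count, $(q-1)^w-(q-2)^w$, in place of your projection/Hamming-ball estimate) --- and then identifies the supports of $\Ab(n,w,t)$ as a maximum $(2w-n+t)$-intersecting family via the exact Erd\H{o}s--Ko--Rado bound applied under complementation, exactly as you do. Your heavy/light threshold $q^{w-1/2}$ is an extra layer of bookkeeping that the paper's direct split avoids, but the underlying ideas coincide.
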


Claims (1'), (1''), (2'),
and (2'')
are obvious.
To show (3') and (3''),
we firstly prove
a more general result,
of independent interest.

\begin{proposition}\label{max t}
Let $T$ be a maximum family of different $w$-subsets of
$[n]$ such that
every two subsets from $T$ intersect in at least $s$ elements.
Let $\A_q$ be the set of all $ |T| (q-1)^w$ words over~$\Z_q$
with supports in~$T$.
\begin{enumerate}
\item[\rm(1)] If $q=2$, then $\A_2$
is a maximum diameter-$(2w-2s)$ anticode.
 \item[\rm(2)] If $q$ is large enough, then $\A_q$
               is a maximum diameter-$(2w-s)$ anticode.
 \item[\rm(3)] If $\C_q$ is a diameter perfect
$\param[|J_q(n,w)|/{|\A_q|}]{n}{w}{2w-s+1}{q}$-code,
then $\C_2$ is a diameter perfect
$\param[|J_2(n,w)|/{|\A_2|}]{n}{w}{2w-2s+2}{q}$-code,
where $\C_2$ is obtained from~$\C_q$ by replacing
all non-zero symbols by~$1$ in all codewords.
\end{enumerate}
\end{proposition}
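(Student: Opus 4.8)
The plan is to treat the three claims in order, reducing everything to the elementary observation that if two weight-$w$ words $\vc x,\vc y$ have supports $X,Y$ with $|X\cap Y|=r$, then $d(\vc x,\vc y)=2(w-r)+\delta$, where $\delta\in\{0,\ldots,r\}$ counts the coordinates of $X\cap Y$ on which $\vc x$ and $\vc y$ differ. For $q=2$ one always has $\delta=0$, so $d(\vc x,\vc y)=2(w-r)\le 2w-2s$ exactly when $|X\cap Y|\ge s$; hence a binary diameter-$(2w-2s)$ anticode is the same thing as a family of $w$-subsets pairwise meeting in at least $s$ points, a maximum instance of which is by definition~$T$. Since $\A_2$ has supports $T$ and size $|T|$, this proves~(1). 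For $q\ge 3$ the maximum of $\delta$ is $r$, so over all words with supports $X,Y$ the largest distance is $2w-r$; as $T$ is pairwise $\ge s$-intersecting, $\A_q$ has diameter at most $2w-s$, giving the ``anticode'' half of~(2).

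For the ``maximum'' half of~(2) I would bound an arbitrary diameter-$(2w-s)$ anticode $\A$. Write $a_X=|\{\vc x\in\A:\supp(\vc x)=X\}|$ and $S(\A)=\{X:a_X>0\}$; trivially $a_X\le(q-1)^w$. First note that any two supports in $S(\A)$ meet in at least $\lceil s/2\rceil$ points, since $|X\cap Y|=r$ with $2r<s$ would force $\delta\le 2r-s<0$, impossible. Now split $S(\A)=G\sqcup B$, where $G$ consists of the supports meeting \emph{every} other support of $S(\A)$ in at least $s$ points. By construction $G$ is itself pairwise $\ge s$-intersecting, so $|G|\le|T|$, and its total contribution is at most $|T|(q-1)^w$.

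The crux is to show the ``bad'' supports in $B$ contribute negligibly and cannot coexist with a full $G$. Each $X\in B$ has a partner $Y\in S(\A)$ with $r:=|X\cap Y|\le s-1$ (and $r\ge\lceil s/2\rceil$ by the previous step); fixing any $\vc y\in\A_Y$, every $\vc x\in\A_X$ must satisfy $\delta\le 2r-s$, i.e.\ its projection onto the $r$ coordinates of $X\cap Y$ lies in a Hamming ball of radius $2r-s$ about the projection of $\vc y$. That ball contains $O(q^{2r-s})$ vectors, and the $w-r$ coordinates of $X\setminus Y$ are free, so $a_X=O(q^{2r-s}(q-1)^{w-r})=O(q^{w+r-s})=O(q^{w-1})$, whence $\sum_{X\in B}a_X\le\binom{n}{w}\,O(q^{w-1})$. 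Finally, if $|G|=|T|$ then $B=\emptyset$: for $X\in B$ and every $Y\in G$ one has $|X\cap Y|\ge s$ (because $Y\in G$), so $G\cup\{X\}$ would be a pairwise $\ge s$-intersecting family larger than~$T$, contradicting maximality. Thus either $|G|=|T|$ and $|\A|\le|T|(q-1)^w$, or $|G|\le|T|-1$ and $|\A|\le(|T|-1)(q-1)^w+\binom{n}{w}\,O(q^{w-1})$, which is below $|T|(q-1)^w$ once $q$ is large. In both cases $|\A|\le|T|(q-1)^w=|\A_q|$, proving~(2). The hard part will be exactly this gap-closing step: the ball count alone yields only asymptotic optimality, and upgrading it to \emph{exact} maximality for large $q$ needs the $G/B$ dichotomy, where maximality of $T$ is what rules out any surviving bad support once $G$ is full.

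For~(3) I would read off the parameters. Diameter-perfectness gives $|\C_q|=|\J_q(n,w)|/|\A_q|=\binom{n}{w}/|T|$. Since the minimum distance $2w-s+1$ exceeds $w$, distinct codewords have distinct supports, and the maximal possible distance $2w-|C\cap C'|$ being at least $2w-s+1$ forces $|C\cap C'|\le s-1$ for any two codeword supports $C,C'$. Replacing nonzero symbols by~$1$ therefore produces $\binom{n}{w}/|T|$ \emph{distinct} binary words whose supports pairwise meet in at most $s-1$ points, so $\C_2$ has minimum distance at least $2(w-(s-1))=2w-2s+2$. By~(1), $\A_2$ is a maximum diameter-$(2w-2s)$ anticode with $|\A_2|=|T|$, and $2w-2s<2w-2s+2$; hence $|\C_2|\cdot|\A_2|=(\binom{n}{w}/|T|)\cdot|T|=\binom{n}{w}=|\J_2(n,w)|$, so $\C_2$ attains the code--anticode bound and is diameter perfect with the stated parameters.
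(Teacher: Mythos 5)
Your proof is correct and follows essentially the same route as the paper's: the same split of the anticode's supports into those meeting every other support in at least $s$ points (your $G$, the paper's $B''$) and the rest, the same appeal to the maximality of $T$ to cap the first part at $|T|-1$ once a badly intersecting pair exists, and the same degree-in-$q$ comparison to finish, with parts (1) and (3) argued identically. The only cosmetic difference is in bounding a bad support's contribution: you count a Hamming ball of radius $2r-s$ on the intersection coordinates, while the paper excludes the at least $(q-2)^w$ words that differ from a witness in all intersection positions; both yield an $O(q^{w-1})$ bound per support.
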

\begin{proof}
Claim (1) is trivial and follows
from the well-known correspondence between
$s$-intersecting families of $w$-subsets of~$[n]$
and sets of diameter-$(2w-2s)$ in $J_2(n,w)$
(we only note that
for binary constant-weight codes,
it is more usual to deal with
the Johnson distance, which 
is twice smaller than the Hamming distance we consider).

To show (2), let us consider
 a diameter-$(2w-s)$ anticode~$\mathcal{B}$.

If for every two words $\vc a$, $\vc b$ in $\mathcal{B}$ it holds $|\supp(\vc a) \cap \supp(\vc b)| \ge s$,
then the words from $\mathcal{B}$ have at most $|T|$ different supports,
and thus $|\mathcal{B}|\le|\A_q|$.

Assume  $\mathcal{B}$ contains $\vc a$ and $\vc b$
such that $|\supp(\vc a)\cap \supp(\vc b)|< s$.
Denote by $B$ the set of all supports of words in $\mathcal{B}$;
we divide it into two disjoint subsets $B'$, $B''$:
 $$ B' = \{ \beta \in B:\ |\beta\cap\alpha|<s \mbox{ for some } \alpha \in B  \},$$
 $$ B'' = \{ \beta \in B:\ |\beta\cap\alpha|\ge s \mbox{ for all } \alpha \in B  \}.$$
We state two claims.
\begin{itemize}
 \item \emph{For every $\beta$ in $B'$, the number of words in $\mathcal{B}$ with support $\beta$ is smaller than $(q-1)^w - (q-2)^w $}.
 Indeed, by the definition of $B'$,
 there is a word $\vc x$ in $\mathcal{B}$  such that $|\beta\cap\supp(\vc x)|<s$.
Any word with support $\beta$ that differs with $\vc x$ in all positions of $\beta\cap\supp(\vc x)$
cannot belong to $\mathcal{B}$ because of the diameter of $\mathcal{B}$.
The number of such words is larger than $(q-2)^w $, and the claim follows.
\item \emph{It holds $|B''| \le |T|-1$}.
Indeed, by the hypothesis, $B$ contains at least one element $\alpha$ not in $|B''|$.
By the definition of $B''$, each two distinct sets in $B'' \cup \{ \alpha \}$
intersect in at least $s$ points.
Since $T$ is a maximum family with this property,
we get
$|B'' \cup \{ \alpha \}| = |B''|+1 \le |T|$.
\end{itemize}
Therefore, we have
\begin{eqnarray*}
 |\mathcal{B}| &<& |B''| (q-1)^w + |B'|\big((q-1)^w - (q-2)^w\big)
 \\
 &<& (|T|-1) (q-1)^w + \textstyle \binom{n}{w}\big((q-1)^w - (q-2)^w\big).
\end{eqnarray*}

Since the degree of the polynomial
$(q-1)^w - (q-2)^w $ is less than $w$,
we have
$|\mathcal{B}| <|T| (q-1)^w = |\A_q|$
for sufficiently large~$q$, and (2) follows.

It remains to prove (3). If $\C_q$ is a distance-$(2w-s+1)$ code,
then the union of the supports of two distinct codewords
has at least~${2w-s+1}$ points; in particular, 
those supports have at most $s-1$ points in common
and hence the distance between the corresponding codewords
of~$\C_2$ (defined as in~(3)) is at least 
$2w-2s+2$. It remains to note that
$|J_q(n,w)|/{|\A_q|}$ does not depend on~$q$,
so $|\C_q| \cdot |\A_q| = |J_q(n,w)|$ 
implies $|\C_2| \cdot |\A_2| = |J_2(n,w)|$.
\end{proof}

The next lemma implies that the set of supports of the words from $\Aa(n,w,t)$ or $\Ab(n,w,t)$
satisfies the hypothesis of Proposition~\ref{max t}.
\begin{lemma}\label{l:EKR}
[Erd\H{o}s--Ko--Rado theorem \cite{ErdosKoRado,Wilson:EKR:84}
for $s$-intersecting families]
\label{er} Let $0\le s\le w\le n$,
and let $T$ be a family of
$w$-subsets of $[n]$
such that every two subsets from $T$ intersect in at least $s$ elements.
\begin{itemize}
 \item[\rm(i)]
If $n\ge (w-s+1)(s+1)$, then
$|T|\le \binom{n-s}{w-s} .$
 \item[\rm(ii)]
If $2w-s\le n\le \frac{(w-s+1)(2w-s-1)}{w-s}$, then
$|T|\le \binom{2w-s}{w} .$
\end{itemize}
% Moreover, if $n> (w-t+1)(t+1)$, then this bound is achieved by a trivially $t$-intersecting system, that is by a family $T$ containing all the $w$-subsets of the set $[n]$ that contain a fixed $t$-subset from the set $[n]$.
\end{lemma}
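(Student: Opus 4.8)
The plan is to recognize Lemma~\ref{l:EKR} as the packaging of two classical extremal results, each attained by an explicit family, and to prove optimality in each range separately. For part~(i) the extremal object is the \emph{star} $\mathcal F_0=\{F\in\binom{[n]}{w}:\ A\subseteq F\}$ for a fixed $s$-set $A$, which is $s$-intersecting and has exactly $\binom{n-s}{w-s}$ members. For part~(ii) it is the family of all $w$-subsets of a fixed $(2w-s)$-set: any two such subsets $A,B$ satisfy $|A\cap B|=|A|+|B|-|A\cup B|\ge 2w-(2w-s)=s$, so they are $s$-intersecting, and there are $\binom{2w-s}{w}$ of them. Thus in both cases the bound is achieved, and only the matching upper bound requires an argument.

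For part~(i), I would first reduce to \emph{shifted} (left-compressed) families by repeatedly applying the compressions $S_{ij}$, $i<j$, that replace $j$ by $i$ in a block whenever the result is not already present; these operations preserve both $|T|$ and the $s$-intersecting property, so it suffices to bound a shifted family. On shifted families the sharp threshold $n\ge(w-s+1)(s+1)$ is obtained by Wilson's eigenvalue (ratio-bound) computation in the Johnson scheme, which is exactly the content of~\cite{Wilson:EKR:84}; the elementary Katona circle method would settle only the case $s=1$. The outcome is that for $n$ above the threshold no $s$-intersecting family can beat the star $\mathcal F_0$, giving $|T|\le\binom{n-s}{w-s}$.

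For part~(ii) I would again pass to a shifted family and then show, following Frankl, that in the range $2w-s\le n\le \frac{(w-s+1)(2w-s-1)}{w-s}$ a shifted $s$-intersecting family must be contained in the set of all $w$-subsets of $[2w-s]$, which immediately yields $|T|\le\binom{2w-s}{w}$. Here the two displayed families are precisely the extreme members $\mathcal F_0$ and $\mathcal F_{w-s}$ of the Ahlswede--Khachatrian chain of candidate optima, and the stated thresholds mark where each of them takes over; the intermediate ranges (not needed here) are exactly the hard part of the complete intersection theorem. The main obstacle in both parts is pinning down the \emph{sharp} thresholds: the extremal constructions and crude non-optimal upper bounds are routine, whereas matching the exact values $(w-s+1)(s+1)$ and $\frac{(w-s+1)(2w-s-1)}{w-s}$ requires the full eigenvalue analysis of~\cite{Wilson:EKR:84} and Frankl's shifting analysis, respectively.
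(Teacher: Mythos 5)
Your proposal is correct in substance but takes a genuinely different (and much heavier) route for part~(ii) than the paper does. The paper gives no proof of~(i) at all: it is cited as Wilson's theorem \cite{Wilson:EKR:84}, exactly the result you invoke, so there you and the paper agree (though, as a small point, Wilson's argument is a direct eigenvalue computation in the Johnson scheme and does not go through shifting; the combination you describe is closer to Frankl's approach). The real divergence is in~(ii). The paper's entire proof of~(ii) is a one-line reduction to~(i) by complementation: replacing each $A\in T$ by $[n]\setminus A$ yields a family of $(n-w)$-subsets that is $s'$-intersecting with $s'=n-2w+s\ge 0$, and applying~(i) with $w'=n-w$, $s'=n-2w+s$ gives the bound $\binom{n-s'}{w'-s'}=\binom{2w-s}{w}$, while the hypothesis $n\ge(w'-s'+1)(s'+1)$ rearranges exactly to $n\le\frac{(w-s+1)(2w-s-1)}{w-s}$. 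You instead propose to prove~(ii) directly via shifting and Frankl's analysis of the Ahlswede--Khachatrian family $\mathcal F_{w-s}$; this works and explains where the threshold comes from structurally, but it imports the hard end of the complete intersection theorem where a trivial duality suffices. Also note that your intermediate claim that a shifted $s$-intersecting family in this range \emph{must} be contained in $\binom{[2w-s]}{w}$ is false as literally stated (the star $\mathcal F_0$ is shifted, $s$-intersecting, and not so contained for $n>2w-s$); it must be phrased as a statement about families of maximum size, and even then it is a substantial theorem rather than a routine step. Finally, your verification that both bounds are attained by explicit families is correct but not needed, since the lemma asserts only upper bounds.
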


Actually, the classic Erd\H{o}s--Ko--Rado theorem,
proved in~\cite{ErdosKoRado} for $n\ge s+(w-s)\binom{w}{s}^3$
and in~\cite{Wilson:EKR:84} for $n\ge (w-s+1)(s+1)$,
does not state~(ii),
which is a direct corollary of~(i), obtained by
replacing the sets from $T$ by their complements
with respect to $[n]$.
All possible maximum $s$-intersecting families
of $w$-subsets of $[n]$
are described in~\cite{AhlKha:97:intersection}.
Note that each of them, by
Proposition~\ref{max t},
produces a sequence
of maximum constant-weight
anticodes over nonbinary alphabet.
We focus on only two types of such anticodes,
$\Aa(n,w,t)$ and $\Ab(n,w,t)$,
because corresponding diameter perfect codes
can be constructed from Steiner systems $S(t,k,n)$, 
which correspond to the two known classes
of binary diameter perfect constant-weight codes 
(with respect to the anticodes $\A'_2(n,k,t)$ 
and $\A''_2(n,n-k,t)$). 
For the other types of maximum $s$-intersecting families,
it is conjectured~\cite[Conjecture~1]{Etzion:2022:CW}
that corresponding binary 
diameter perfect
constant-weight codes do not exist,
which implies (if the conjecture is true)
nonexistence of corresponding non-binary codes, 
by Proposition~\ref{max t}(3).

% {\color{red} TO ADDRESS: Reviewer: ``5. ... Please explain why diameter perfect codes cannot be constructed
% for other types of anticodes.''
% REPLY:
% We do not claim that they cannot be constructed in the other cases. We
% only claimed that they can be constructed in our two cases. We have
% rephrased this place, not to confuse the reader anymore.
% TODO: rephrase, maybe some reference to Delsarte's conjecture}

\begin{proof}[Proof of Theorem~\ref{th:anti'}(3') and Theorem~\ref{th:anti''}(3'')]
From Lemma~\ref{l:EKR}(i) we see that the set
of supports from $\Aa(n,w,t)$,
under the conditions of Theorem~\ref{th:anti'},
forms a maximum $t$-intersecting family;
hence, (3') follows from Proposition~\ref{max t}.

Further, denote $s=2w-n+t$.
We see that
the set
of supports from $\Ab(n,w,t)$
forms an $s$-intersecting family.
The hypothesis
$n\le \frac{(w-s+1)(2w-s-1)}{w-s}$
of Lemma~\ref{l:EKR}(ii)
is equivalent to the condition
$n\ge (n-w-t+1)(t+1)$ of Theorem~\ref{th:anti''},
while $2w-s\le n$ is equivalent
to the trivial $0\le t$.
Therefore, we can apply  Lemma~\ref{l:EKR}(ii)
and conclude that we have a
maximum $s$-intersecting family.
Proposition~\ref{max t} completes the proof of (3'').
\end{proof}

\section[Codes and bounds for the smallest q]{Codes and bounds for the smallest $q$}\label{s:q0}

\begin{theorem}\label{t1}
Let $t$, $k$, and $n$ be integers
such that $0<t\le k<n$.
If there exists a Steiner system $S(t,k,n)$,
then the following assertions hold.
\begin{itemize}
 \item[\rm(i)] For $w=k$ and
  $q \ge \frac{\binom{n-1}{t-1}}{\binom{w-1}{t-1}}+1$, there is an
$\param[{\binom{n}{t}}/{\binom{w}{t}}]{n}{w}{2w-t+1}{q}$ code,
which attains the code--anticode bound
with the anticode $\Aa(n,w,t)$.

 \item[\rm(ii)] For $w=n-k$ and
  $q \ge \frac{\binom{n}{t}}{\binom{k}{t}} - \frac{\binom{n-1}{t-1}}{\binom{k-1}{t-1}}+1$,
 there is an
$\param[{\binom{n}{t}}/{\binom{n-w}{t}}]{n}{w}{n-t+1}{q}$ code,
which attains the code--anticode bound
with the anticode $\Ab(n,w,t)$.
\end{itemize}
If there is no $S(t,k,n)$,
then $\param[{\binom{n}{t}}/{\binom{k}{t}}]{n}{k}{2k-t+1}{q}$ codes
and
$\param[{\binom{n}{t}}/{\binom{k}{t}}]{n}{n-k}{n-t+1}{q}$ codes do not exist for any $q$.
\end{theorem}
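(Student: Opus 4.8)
The plan is to prove the two existence claims (i) and (ii) by explicit constructions and the final nonexistence claim by a counting argument; all three rest on the elementary double inequality $|\supp(\vc x)\triangle\supp(\vc y)|\le d(\vc x,\vc y)\le|\supp(\vc x)\cup\supp(\vc y)|$, valid for any two words, together with the basic counts of Lemma~\ref{l2}.

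For (i) I would take exactly one codeword per block of the given $S(t,k,n)$, so the supports are the blocks and $|\C|=\binom{n}{t}/\binom{w}{t}$. In column $i$ of the code matrix the nonzero entries sit in the rows whose block contains the point $i$; by Lemma~\ref{l2} there are $r=\binom{n-1}{t-1}/\binom{w-1}{t-1}$ such rows, and since $q-1\ge r$ I can give them pairwise distinct nonzero symbols. Carrying this out independently in every column forces any two codewords to differ in \emph{every} point common to their supports. Thus, for distinct blocks $B_1,B_2$ with $m=|B_1\cap B_2|\le t-1$, the distance equals $|B_1\triangle B_2|+m=(2k-2m)+m=2k-m\ge 2k-t+1=2w-t+1$. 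Attainment of the code--anticode bound then reduces to the binomial identity $\binom{n}{t}\binom{n-t}{w-t}=\binom{n}{w}\binom{w}{t}$, combined with $|\Aa(n,w,t)|=\binom{n-t}{w-t}(q-1)^w$ from Theorem~\ref{th:anti'}(1') and $|\J_q(n,w)|=\binom{n}{w}(q-1)^w$.

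Construction (ii) is the mirror image: again one codeword per block, but now let the support be the \emph{complement} of the block, so supports are $(n-k)=w$-subsets and $|\C|=\binom{n}{t}/\binom{k}{t}$. A point $i$ lies in two codewords' supports exactly when neither block contains $i$, and the number of blocks avoiding a fixed point is $\binom{n}{t}/\binom{k}{t}-\binom{n-1}{t-1}/\binom{k-1}{t-1}$; the hypothesis on $q$ is precisely what lets me assign those rows distinct nonzero symbols in column $i$. As before this makes any two codewords differ on all of $\bar B_1\cap\bar B_2$, and with $m=|B_1\cap B_2|\le t-1$ the distance becomes $(2k-2m)+(n-2k+m)=n-m\ge n-t+1$. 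Here attainment follows from $\binom{n}{t}\binom{n-t}{w}=\binom{n}{w}\binom{n-w}{t}$ together with $|\Ab(n,w,t)|=\binom{n-t}{w}(q-1)^w$ from Theorem~\ref{th:anti''}(2''). In both constructions the key point — and the only place where the bound on $q$ enters — is that $q$ is large enough to realize the per-column injective labelling; everything else is forced.

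For the converse, suppose a $\param[{\binom{n}{t}}/{\binom{k}{t}}]{n}{k}{2k-t+1}{q}$ code exists. The double inequality shows first that distinct codewords have distinct supports (equal supports would force distance $\le k<2k-t+1$, using $t\le k$) and second that any two supports meet in at most $t-1$ points. Consequently no $t$-subset of $[n]$ lies in two supports, so the $\binom{k}{t}$ $t$-subsets inside each of the $\binom{n}{t}/\binom{k}{t}$ supports are all distinct; their total count is exactly $\binom{n}{t}$, i.e.\ every $t$-subset is covered exactly once, and the supports form an $S(t,k,n)$, contradicting its nonexistence. The weight-$(n-k)$ case is identical after replacing each support by its complement and using distance $\ge n-t+1$. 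I expect the main (mild) obstacle to be verifying that this counting is \emph{tight} — that the exact cardinality $\binom{n}{t}/\binom{k}{t}$ upgrades ``each $t$-set covered at most once'' to ``exactly once,'' which is exactly what yields the Steiner structure.
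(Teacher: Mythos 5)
Your proof is correct and follows essentially the same route as the paper's for the two constructions: one codeword per block (respectively, per block complement), with the nonzero entries in each column made pairwise distinct, which is exactly what the hypothesis on $q$ permits; the distance computation ($d=2k-m$ resp.\ $d=n-m$ with $m\le t-1$) and the cardinality check against the anticode size are then the same. The only divergence is the final nonexistence claim: the paper reduces to the binary case via Proposition~\ref{max t}(3) and invokes the fact that the supports of the resulting binary code form an $S(t,k,n)$, whereas you run the counting argument (distinct supports, pairwise intersections at most $t-1$, hence $M\binom{k}{t}=\binom{n}{t}$ forces each $t$-set to be covered exactly once) directly on the $q$-ary code — a self-contained and equally valid variant.
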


\begin{proof}
(i)
Assume there exists a Steiner system $S(t,w,n)$
with point set $[n]$ and
blocks $\Delta_{1}$, $\Delta_{2}$, \ldots, $\Delta_M$,
$M={\binom{n}{t}}/{\binom{w}{t}}$.
Let $\vc x^i$, $i=1,\ldots,M$, be the length-$n$ word containing $0$ in the $j$th position
if $j\not\in \Delta_i$ and containing a positive value $a$ 
in the $j$th position
if $\Delta_i$ is the $a$th block with $j$
(that is, if $j\in \Delta_i$ and $a=|\{ l \in \{1,\ldots ,i\}:\ j\in \Delta_l\}|$).
According to this definition, $\wt(\vc x^i)=w$
and two different words
$\vc x^i$, $\vc x^l$ coincide only in positions where they both have zeros.
By the definition of Steiner system,
the supports of such $\vc x^i$ and $\vc x^l$ have no more than $t-1$ common elements;
hence their union has size at least $2w-(t-1)$, and $d(\vc x^i,\vc x^l) \ge 2w-t+1$.
It remains to note that by Lemma~\ref{l2} all codewords
involve $\binom{n-1}{t-1}/\binom{w-1}{t-1}$ nonzero symbols.
We conclude that $\{\vc x^1,\ldots,\vc x^M\}$  can be treated as an
$\param[{\binom{n}{t}}/{\binom{w}{t}}]{n}{w}{2w-t+1}{q}$ code
for every $q \ge \binom{n-1}{t-1}/\binom{w-1}{t-1}+1$.

(ii)
Similarly, assume there exists a Steiner system $S(t,n-w,n)$
with point set $[n]$ and
blocks $\Delta_{1}$, $\Delta_{2}$, \ldots, $\Delta_M$,
$M={\binom{n}{t}}/{\binom{n-w}{t}}$.
Similarly to~(i), we can construct
a code such that
the codeword supports are the complements of the Steiner system blocks
and 
for each coordinate all the nonzero elements
of the codewords in that coordinate are distinct.
All codewords have weight~$w$,
and any two different codewords
$\vc x$, $\vc y$ coincide only in positions where they both have zeros.
Since such $\vc x$ and $\vc y$ have no more than $t-1$ common zero positions,
their union has size at least $n-(t-1)$, and $d(\vc x,\vc y) \ge n-t+1$.
It remains to note that by Lemma~\ref{l2} all codewords
involve $\binom{n}{t}/\binom{w}{t}-\binom{n-1}{t-1}/\binom{w-1}{t-1}$
nonzero symbols.
We conclude that $\{\vc x^1,\ldots,\vc x^M\}$  can be treated as an
$\param[{\binom{n}{t}}/{\binom{n-w}{t}}]{n}{w}{n-t+1}{q}$ code
for every $q \ge \binom{n}{t}/\binom{w}{t}-\binom{n-1}{t-1}/\binom{w-1}{t-1}+1$.

It remains to show the necessity of a Steiner system.
It follows from Proposition~\ref{max t}(3), 
because the supports of the codewords
of an
$\param[{\binom{n}{t}}/{\binom{k}{t}}]{n}{k}{2k-2t+2}{2}$ code,
as well as
the complements of the codeword supports 
of an
$\param[{\binom{n}{t}}/{\binom{k}{t}}]{n}{n-k}{2k-2t+2}{2}$
code ($w=n-k$ and $s=n-2k+t$ in Proposition~\ref{max t}(3)),
form an $S(t,k,n)$.
\end{proof}

\begin{remark}\label{r:w=t}
The special cases $w=t$ and $n-w=t$ of Theorem~\ref{t1} (cases (i) and (ii), respectively)
correspond to \cite[Theorem~16]{Etzion:2022:CW},
where the code distance is $w+1$.
If $w>t$ and $n-w>t$, then the code distances
in Theorem~\ref{t1} are larger than $ w +1$.
\end{remark}

The bounds for $q$ given in Theorem~\ref{t1}
are not best possible. For each $t$, $k$,
and $n$ such that there is
a Steiner system $S(t,k,n)$,
we denote
\begin{itemize}
 \item
by $q'_{0}(t,k,n)$ the smallest $q$
such that
an
$\param[{\binom{n}{t}}/{\binom{k}{t}}]{n}{k}{2k-t+1}{q}$ code exists
 \item
by $q''_{0}(t,k,n)$ the smallest $q$
such that
an
$\param[{\binom{n}{t}}/{\binom{k}{t}}]{n}{n-k}{n-t+1}{q}$ code exists
\end{itemize}
We emphasize that in the notation $q'_{0}(t,k,n)$ the parameter $k$
coincides with the weight of the corresponding constant-weight codes;
that is why we will often see $w$ at the place of that parameter.
The situation is different with $q''_{0}(t,k,n)$, where
the corresponding weight is $n-k$.
From Theorem~\ref{t1}, we get the following bounds for $q'_0$ and $q''_0$,
which will be called \emph{trivial} and can be further improved.
\begin{corollary}
  \label{c1}
$q'_0(t,w,n) \le \frac{\binom{n-1}{t-1}}{\binom{w-1}{t-1}}+1$;\quad
$q''_0(t,k,n) \le \frac{\binom{n}{t}}{\binom{k}{t}}-\frac{\binom{n-1}{t-1}}{\binom{k-1}{t-1}}+1$.
\end{corollary}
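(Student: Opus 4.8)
The plan is to read both inequalities off directly from Theorem~\ref{t1} together with the definitions of $q'_0$ and $q''_0$; this is a bookkeeping corollary, so I do not expect any genuine obstacle, only a check that the hypotheses line up.

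First, for the bound on $q'_0(t,w,n)$, I would apply Theorem~\ref{t1}(i) with $k=w$. That assertion guarantees the existence of an $\param[{\binom{n}{t}}/{\binom{w}{t}}]{n}{w}{2w-t+1}{q}$ code for every $q\ge \binom{n-1}{t-1}/\binom{w-1}{t-1}+1$. By definition, $q'_0(t,w,n)$ is the \emph{smallest} alphabet size for which a code with exactly these parameters exists; since the displayed value is one admissible choice of $q$, it is in particular an upper bound for the smallest such $q$. This yields the first inequality at once.

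Second, for $q''_0(t,k,n)$ I would invoke Theorem~\ref{t1}(ii) with $w=n-k$. This produces an $\param[{\binom{n}{t}}/{\binom{n-w}{t}}]{n}{w}{n-t+1}{q}$ code whenever $q\ge \binom{n}{t}/\binom{k}{t}-\binom{n-1}{t-1}/\binom{k-1}{t-1}+1$. Because $n-w=k$, these are precisely the parameters appearing in the definition of $q''_0(t,k,n)$, so the same minimality argument—an admissible $q$ bounds the least admissible $q$ from above—gives the second inequality.

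The only point that needs a word of justification, and the closest thing to a subtlety, is that both parts of Theorem~\ref{t1} presuppose the existence of the relevant Steiner system ($S(t,w,n)$ in case (i) and $S(t,n-w,n)$ in case (ii)). This is exactly the standing hypothesis under which $q'_0(t,w,n)$ and $q''_0(t,k,n)$ are defined in the first place, so the hypotheses match and no further argument is required. I would therefore state the corollary as an immediate consequence of Theorem~\ref{t1} and the two definitions.
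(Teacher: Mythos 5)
Your proposal is correct and matches the paper exactly: the paper states Corollary~\ref{c1} as an immediate consequence of Theorem~\ref{t1} (with no separate proof), which is precisely the reading-off argument you give, including the observation that the Steiner-system existence hypothesis is already built into the definitions of $q'_0$ and $q''_0$.
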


Below, we find some values
of $q'_0$ and $q''_0$ for small $t$,
improve the trivial bound
for arbitrary $t$,
and consider in details
some concrete parameter
collections $(t,w,n)$.

\subsection[Bounds for q'0: t=2 and t=3]
{Bounds for $q'_0$: $t=2$ and $t=3$}\label{s:23}

\begin{theorem}\label{t2}
If there exists an $S(2,w,n)$, then $q'_0(2,w,n) = \frac{n-1}{w-1}+1$.
\end{theorem}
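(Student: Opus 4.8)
The plan is to establish matching upper and lower bounds on $q'_0(2,w,n)$. The upper bound $q'_0(2,w,n)\le \frac{n-1}{w-1}+1$ is immediate from Corollary~\ref{c1} (equivalently, from Theorem~\ref{t1}(i) specialized to $t=2$, $k=w$, using that the hypothesized $S(2,w,n)$ supplies a code of the required parameters). Hence the whole task reduces to the lower bound $q'_0(2,w,n)\ge \frac{n-1}{w-1}+1$, that is, to showing that no $\param[{\binom{n}{2}}/{\binom{w}{2}}]{n}{w}{2w-1}{q}$ code can exist once $q\le \frac{n-1}{w-1}$.

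First I would analyze the support structure forced by the parameters. Fix such a code $\C$ with $M=\binom{n}{2}/\binom{w}{2}$ codewords, each of weight $w$, and minimum distance $2w-1$. If two distinct codewords share $c$ support coordinates, their Hamming distance is at most $2w-c$ (positions outside the union agree, the $2w-2c$ positions of the symmetric difference all differ, and the $c$ common-support positions differ in at most $c$ places); since this distance is at least $2w-1$, we get $c\le 1$. Thus any two supports meet in at most one point, so no $2$-subset of $[n]$ can lie in two different supports. Consequently the $M$ supports cover $M\binom{w}{2}=\binom{n}{2}$ pairs with no repetition, and as there are only $\binom{n}{2}$ pairs, every pair is covered exactly once. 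Therefore the supports form a Steiner system $S(2,w,n)$, and by Lemma~\ref{l2} each point of $[n]$ lies in exactly $r:=\frac{n-1}{w-1}$ of them.

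The crucial step is then to read off the alphabet constraint from the distance requirement at a single coordinate. Fix a point $j\in[n]$ and consider the $r$ codewords whose supports contain $j$. Any two of them share the support coordinate $j$; since their supports intersect in at most one point, $j$ is their unique common support coordinate, so all remaining $2w-2$ support positions already contribute to the distance. For the distance to attain $2w-1$, the two codewords must in addition differ at $j$ itself. Hence the $r$ codewords through $j$ carry pairwise distinct nonzero symbols in coordinate $j$, which forces $q-1\ge r$, i.e.\ $q\ge \frac{n-1}{w-1}+1$. Combined with Corollary~\ref{c1}, this yields the claimed equality.

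I expect the only point requiring care to be the counting step that upgrades ``pairwise support intersections of size at most one'' into ``the supports form a genuine $S(2,w,n)$'', as this is what pins down the replication number $r$ exactly; once $r$ is in hand, the per-coordinate distinctness argument is routine. I do not anticipate a genuine obstacle here, since the parameters are precisely those for which the trivial bound of Corollary~\ref{c1} turns out to be tight.
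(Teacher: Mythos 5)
Your proof is correct and follows essentially the same route as the paper's: the upper bound comes from Corollary~\ref{c1}, and the lower bound from observing that the distance $2w-1$ forces any two codewords to disagree at every common support position, so the $\frac{n-1}{w-1}$ codewords through a fixed point must carry pairwise distinct nonzero symbols there. You are slightly more explicit than the paper in verifying that the supports of any code with these parameters necessarily form an $S(2,w,n)$ (the paper takes this as given by saying the code is ``constructed from'' the system), but the substance of the argument is identical.
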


\begin{proof}
According to Corollary~\ref{c1}, $q'_0(2,w,n) \le \frac{n-1}{w-1}+1$.

On the other hand, assume we have
an $\param[\frac{n(n-1)}{w(w-1)}]{n}{w}{2w-1}{q}$ code $\C$
constructed from $S(2, w, n)$.
Take any two codewords
$\vc x$, $\vc y$ in $\C$.
By the code parameters,
$\wt(\vc x)=\wt(\vc y)=w$ and
$d(\vc x,\vc y) \ge 2w-1$.
Let
$$k=|\{i \in [n] :\, x_{i}=y_{i}\ne 0 \}|.$$
We have $2w-1\le d(\vc x,\vc y) \le 2(w-k)$;
hence $k=0$.
By Lemma~\ref{l2}, at any position, there are $\frac{n-1}{w-1}$ codewords in $\C$ that have a non-zero symbol at this position.
From $k=0$, we see that all those
non-zero symbols are distinct.
Therefore,
$q'_0(2, w, n) \ge \frac{n-1}{w-1}+1$.
\end{proof}

\begin{remark}\label{r:t2}
For $t=w=2$,
we have $q'_0(2,2,n)=n$;
this case was considered
in~\cite[Corollary~11]{Etzion:2022:CW},
see also~\cite[Theorem~9]{CheeLing:2007:CW}, where
the maximum cardinality of an $\param{n}{2}{3}{q}$ code
was determined for any~$q$ and~$n$.
Diameter perfect codes with $t=2$ and $w=3$
occur in~\cite[Theorem~6.6]{CGZZ:2015:Hanani}
as a part of the classification of optimal $\param{n}{3}{5}{q}$ codes.
\end{remark}

\begin{example}
 From the Steiner system $S(2, 3, 7),$ with the blocks
 $$\{1,2,3\},\ \{1,4,5\},\ \{1,6,7\},
\ \{2,4,6\},\ \{2,5,7\},\ \{3,4,7\},\ \{3,5,6\},$$
we can construct the following $\param[7]{7}{3}{5}{q}$ code,
$q\ge 4$:
$$\{\,
 1  1  1  0  0  0  0, \
 2  0  0  1  1  0  0, \
 3  0  0  0  0  1  1, \
 0  2  0  2  0  2  0, \
 0  3  0  0  2  0  2, \
 0  0  2  3  0  0  3, \
 0  0  3  0  3  3  0  \,\}.
 $$
The anticode $\Aa( 7, 3, 2)$ has diameter $4$ and  size $5(q-1)^{3}$.
It follows from the existence of the code that
$\Aa( 7, 3, 2)$
is a maximum diameter-$4$ anticode of length~$7$ and weight~$3$ over~$\Z_{q}$, $q\ge 4$.
\end{example}

\begin{theorem}
 \label{t3}
If there exists an $S(3,w,n)$ system~$S$, then
\begin{equation}\label{eq:mnmx}
q'_0(3, w , n) =
\min_{\text{$S$   is $S(3, w , n)$}}
\max_{\text{$D \in S'$}}
\chi(D)+1,
\end{equation}
where $S'$ is the set of all $S(2,w-1 , n-1)$ derived from~$S$.
\end{theorem}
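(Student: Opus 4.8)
The plan is to prove both inequalities in~\eqref{eq:mnmx} through a single local reformulation of the distance condition, phrased in terms of proper colourings of the derived designs.

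First I would record the structural fact that pins down the supports. By Theorem~\ref{t1} (specifically the necessity argument based on Proposition~\ref{max t}(3)), the supports of any $\param[\binom{n}{3}/\binom{w}{3}]{n}{w}{2w-2}{q}$ code form a Steiner system $S(3,w,n)$. So I may index the codewords by the blocks $\Delta_1,\dots,\Delta_M$ of such a system, with $x^i_p\in\{1,\dots,q-1\}$ for $p\in\Delta_i$ and $x^i_p=0$ otherwise. For two codewords I would compute $d(\vc x^i,\vc x^l)=2w-2c+e$, where $c=|\Delta_i\cap\Delta_l|\le 2$ (two blocks of an $S(3,w,n)$ meet in at most $t-1=2$ points) and $e$ counts the common points at which the two codewords disagree. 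The requirement $d\ge 2w-2$ then reads $e\ge 2c-2$, which is automatic for $c\le 1$ and, for $c=2$, forces $e=2$: \emph{whenever two blocks meet in exactly two points, the corresponding codewords must differ at both of those points}.

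Next I would localise this condition at each coordinate. Fix a point $p$; the blocks through $p$ are exactly the blocks of the derived system $D_p=S(2,w-1,n-1)$, and two such blocks meet in a second point $p'$ precisely when their derived blocks share the single point $p'$, i.e.\ are adjacent in the minimum-distance graph of $D_p$. The entries $(x^i_p)_{p\in\Delta_i}$ thus assign to each block of $D_p$ a colour from $\{1,\dots,q-1\}$. The crucial observation is that the two-coordinate requirement from the previous paragraph \emph{decouples}: for a pair of blocks meeting in $\{p,p'\}$, ``differ at $p$'' is a constraint involving coordinate $p$ only, while ``differ at $p'$'' involves coordinate $p'$ only, and $x^i_p$ occurs in no constraint outside coordinate $p$. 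Consequently the code has minimum distance $\ge 2w-2$ if and only if, for every point $p$, the colouring induced on the blocks of $D_p$ is a proper colouring of the minimum-distance graph of $D_p$; and these $n$ colourings may be chosen independently.

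From this equivalence both bounds follow. For the upper bound I would fix an $S(3,w,n)$ attaining $\min_S\max_{D\in S'}\chi(D)=:c$, colour each $D_p$ optimally with $\chi(D_p)\le c$ colours, and set $x^i_p$ equal to the colour of $\Delta_i$ in $D_p$; this yields a valid code over the alphabet $\{0,1,\dots,c\}$, so $q'_0(3,w,n)\le c+1$. For the lower bound, any code meeting the parameters has supports forming some $S(3,w,n)$ and induces, at each $p$, a proper colouring of the minimum-distance graph of $D_p$ using the $q-1$ available nonzero symbols; hence $q-1\ge\chi(D_p)$ for every $p$, so $q\ge\max_{D\in S'}\chi(D)+1\ge\min_S\max_{D\in S'}\chi(D)+1$. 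Combining the two bounds gives~\eqref{eq:mnmx}. The main obstacle is conceptual rather than computational: one must recognise that the distance constraint, which superficially couples pairs of coordinates, splits into independent single-coordinate proper-colouring problems on the derived planes, so that exactly $\max_{D\in S'}\chi(D)$ nonzero symbols suffice with none wasted on cross-coordinate interaction; the only other point needing care is the appeal to Theorem~\ref{t1} guaranteeing that an optimal code's supports form a Steiner system, which legitimises the minimisation over $S$.
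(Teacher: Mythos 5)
Your proof is correct and follows essentially the same route as the paper: the lower bound by reading the nonzero entries in a fixed coordinate $p$ as a proper colouring of the minimum-distance graph of the derived $S(2,w-1,n-1)$ at $p$, and the upper bound by the reverse construction, which is exactly the paper's Theorem~\ref{t4} specialised to $t=3$ (where the two extra terms in that bound cancel). The only presentational difference is that you package both directions as a single coordinate-wise decoupling equivalence rather than deferring the upper bound to the general-$t$ theorem.
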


\begin{proof}
 The upper bound
 $\displaystyle
 q'_0(3, w , n) \le
 \min_{\text{$S$ is $ S(3, w , n)$}} \max_{\text{$D \in S'$}} \chi(D)+1$
 is proved in Theorem~\ref{t4} below for more general settings.
 It remains to show the lower bound.
 Assume we have an
 $\param[\binom{n}{3}/\binom{w}{3}]{n}{w}{2w-2}{q}$
 code~$C$. From the code distance, we see that the supports
 of two different codewords cannot intersect in more than~$2$ points,
 and from the code cardinality we conclude that the codeword supports
 form an~$S(3,w,n)$.
 Next, if two different codewords $\vc{x}$ and $\vc{y}$
 have the same nonzero value in some position~$i$,
 then $d(\vc{x},\vc{y})=2w-2$ and
 the corresponding two blocks $\supp(\vc{x})$, $\supp(\vc{y})$ intersect in only one point,~$i$.
 It follows that $\supp(\vc{x})\backslash \{i\}$, $\supp(\vc{y})\backslash \{i\}$
 are disjoint blocks of the derived $S(2,w-1 , n-1)$ system~$D$
 on the point set~$[n]\backslash \{i\}$.
 Thus, if we now color the blocks of~$D$ with the value of the corresponding codeword
 in the position~$i$, the coloring will be proper for the minimum-distance graph of~$D$.
 It follows that $q \ge \chi(D)+1$.
\end{proof}

\begin{corollary}\label{c:nw3}
 If there is an $S(3, w , n)$
such that each its derived $S(2, w-1 , n-1)$ is resolvable,
then it holds
\begin{equation}\label{eq:3wn}
 q'_0(3, w , n) =
\frac{n-2}{w-2}+1.
\end{equation}
\end{corollary}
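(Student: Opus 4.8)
The plan is to evaluate the min--max expression of Theorem~\ref{t3} explicitly under the resolvability hypothesis. By that theorem,
\[
q'_0(3,w,n) = \min_{S\text{ is }S(3,w,n)}\ \max_{D\in S'}\ \chi(D) + 1,
\]
where $S'$ is the family of derived $S(2,w-1,n-1)$ systems. So the whole task reduces to showing that the quantity $\min_S \max_D \chi(D)$ equals $\frac{n-2}{w-2}$. The first step is to reinterpret $\chi(D)$ for a derived system $D=S(2,w-1,n-1)$. Since $D$ has strength $t=2$, a proper coloring of its minimum-distance graph is, by definition, a partition of the blocks of $D$ into cells in which any two distinct blocks meet in at most $t-2=0$ points; that is, each color class is a set of pairwise disjoint $(w-1)$-blocks of the $(n-1)$-point set. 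Thus $\chi(D)$ is exactly the least number of \emph{partial parallel classes} needed to cover all blocks of $D$.

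Next I would establish a universal lower bound $\chi(D)\ge \frac{n-2}{w-2}$ valid for \emph{every} $S(2,w-1,n-1)$ system, resolvable or not. This is a counting argument: by Lemma~\ref{l2}, $D$ has $\binom{n-1}{2}\big/\binom{w-1}{2}=\frac{(n-1)(n-2)}{(w-1)(w-2)}$ blocks, while any color class, consisting of pairwise disjoint $(w-1)$-subsets of an $(n-1)$-set, contains at most $\frac{n-1}{w-1}$ blocks. Dividing the total block count by this maximum class size forces the number of classes to be at least $\frac{n-2}{w-2}$. Because this holds for every derived $D$ and every $S(3,w,n)$, we immediately get $\max_{D\in S'}\chi(D)\ge \frac{n-2}{w-2}$ for all $S$, and hence $\min_S\max_D\chi(D)\ge \frac{n-2}{w-2}$.

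For the matching upper bound I would use the hypothesis. Take the particular $S$ all of whose derived systems are resolvable. For such a $D$, each parallel class is an $S(1,w-1,n-1)$, i.e.\ a set of $\frac{n-1}{w-1}$ pairwise disjoint blocks partitioning the point set, so a resolution is precisely a proper coloring of the minimum-distance graph. Counting the classes of a resolution (total blocks divided by $\frac{n-1}{w-1}$) gives exactly $\frac{n-2}{w-2}$ parallel classes, whence $\chi(D)\le \frac{n-2}{w-2}$. Combined with the lower bound this yields $\chi(D)=\frac{n-2}{w-2}$ for every $D\in S'$, so $\max_{D\in S'}\chi(D)=\frac{n-2}{w-2}$ for this $S$, and therefore $\min_S\max_D\chi(D)=\frac{n-2}{w-2}$. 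Adding~$1$ gives~\eqref{eq:3wn}.

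There is no serious obstacle here; the argument is a clean two-sided estimate of a chromatic number by parallel-class counting. The one point demanding care is logical rather than computational: the lower bound must be argued uniformly over \emph{all} $S(3,w,n)$ (not just the resolvable witness) so that the minimum over $S$ cannot dip below $\frac{n-2}{w-2}$, while the upper bound needs only the single resolvable system guaranteed by the hypothesis to be attained. Keeping straight which estimate is universal and which is witnessed by the special $S$ is the only subtlety.
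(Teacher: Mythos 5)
Your proof is correct and follows essentially the same route as the paper's: apply Theorem~\ref{t3} and evaluate $\chi(D)$ for a derived system $D$ by counting blocks per (partial) parallel class. The paper's version is terser---it simply asserts $\chi(D)=\frac{n-2}{w-2}$ for a resolvable $D$ and says that \eqref{eq:mnmx} turns into \eqref{eq:3wn}---whereas you spell out the universal lower bound $\chi(D)\ge\frac{n-2}{w-2}$ valid for every derived $S(2,w-1,n-1)$, which is needed so that the minimum over all $S(3,w,n)$ cannot drop below this value; that is a detail the paper leaves implicit, not a different method.
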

\begin{proof}
 For a resolvable $S(2, w-1 , n-1)$ system~$D$,
 we have
 $$\chi(D)=
 \frac{\binom{n-1}{2}}{\binom{w-1}{2}}\Big/
      \frac{\binom{n-1}{1}}{\binom{w-1}{1}}
 = \frac{n-2}{w-2},$$
 and \eqref{eq:mnmx} turns to~\eqref{eq:3wn}.
\end{proof}

\begin{remark}\label{r:t3}
The case $t=w=3$ was solved
in~\cite[Theorem~17]{Etzion:2022:CW}:
if $n$ is odd then $q'_0(3,3,n)=n-1$,
and $q'_0(3,3,n)=n$ if even.
The corresponding code parameters are also found in~\cite{CDLL:2008:w3},
as a part of the classification of optimal constant-weight codes with weight~$3$.
\end{remark}

Trivially, all $S(t-1, w -1, n-1)$ systems derived
from a $2$-resolvable $S(t, w , n)$ are resolvable.
It is proved in~\cite{Baker:1976} that there are  $2$-resolvable
$S(3,4,4^m)$ for each $m\ge 1$.
In~\cite{Teirlinck:94},
$2$-resolvable Steiner systems $S(3, 4, n)$
are constructed for  $n=2\cdot p^m+2$, $p\in\{7,31,127\}$, $m\ge 1$.
Moreover, from Keevash's theory~\cite{Keevash:2014}
we know that for given~$w$ and~$n$ large enough, % (i.e., $n\ge n_0(w)$),
$2$\mbox{-}resolv\-able $S(3, w,n)$ systems exist
if some divisibility conditions are satisfied.
% subject to some divisibility conditions.
In particular, for $w = 4$, those conditions
are equivalent to $n\equiv 4 \bmod 12$.
However, for
Corollary~\ref{c:nw3},
the condition $n\equiv 4 \bmod 12$
(as well as the $2$\mbox{-}divis\-i\-bil\-i\-ty
of $S(3,4,n)$)
is not necessary;
for example all $S(2,3,9)$
derived from $S(3,4,10)$
are resolvable, while
$10\not\equiv 4 \bmod 12$
and
$S(3,4,10)$ is not $2$-resolvable.

\begin{corollary}\label{c:4mod12}
If $n=4^m$ or $n=2\cdot p^m+2$, $p\in\{7,31,127\}$,
$m\in\{1,2,\ldots\}$,
or $n$ large enough satisfying
$n\equiv 4 \bmod 12$,
or $n=10$, then
$q'_0(3,4,n)=n/2$.
\end{corollary}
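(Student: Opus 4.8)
The plan is to reduce everything to Corollary~\ref{c:nw3}. Putting $w=4$ into \eqref{eq:3wn} gives $\frac{n-2}{w-2}+1=\frac{n}{2}$, so it suffices to exhibit, for each value of $n$ listed in the statement, a Steiner system $S(3,4,n)$ all of whose derived $S(2,3,n-1)$ systems are resolvable; Corollary~\ref{c:nw3} then yields $q'_0(3,4,n)=n/2$ immediately. Thus the whole argument becomes a matter of supplying the right source systems.

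The first three families are handled uniformly by passing through a $2$-resolvable system, since, as recorded in the paragraph preceding the corollary, every $S(2,3,n-1)$ derived from a $2$-resolvable $S(3,4,n)$ is resolvable. For $n=4^m$ the required $2$-resolvable $S(3,4,n)$ is provided by \cite{Baker:1976}; for $n=2\cdot p^m+2$ with $p\in\{7,31,127\}$ it is provided by \cite{Teirlinck:94}; and for $n\equiv 4\bmod 12$ with $n$ large enough it follows from Keevash's existence theory \cite{Keevash:2014}, the divisibility conditions for a $2$-resolvable $S(3,4,n)$ reducing precisely to $n\equiv 4\bmod 12$. In each of these cases $2$-resolvability forces all derived systems to be resolvable, so the hypothesis of Corollary~\ref{c:nw3} is met and the conclusion $q'_0(3,4,n)=n/2$ follows.

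The remaining case $n=10$ must be treated separately, because (as noted in the text) no $2$-resolvable $S(3,4,10)$ exists, so the above route is unavailable. Here I would argue directly on the derived systems: take any $S(3,4,10)$ — one exists — and observe that each of its derived systems is an $S(2,3,9)$, i.e. a Steiner triple system on nine points. Since the unique $S(2,3,9)$ is the affine plane $\mathrm{AG}(2,3)$, which is resolvable, every derived $S(2,3,9)$ is automatically resolvable, independently of which $S(3,4,10)$ was chosen. Hence Corollary~\ref{c:nw3} applies once more and gives $q'_0(3,4,10)=5$.

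I do not anticipate a genuine obstacle: the statement is essentially an assembly of the cited existence results for $2$-resolvable $S(3,4,n)$ together with Corollary~\ref{c:nw3}. The only points requiring care are the $n=10$ case, where one must replace $2$-resolvability by the uniqueness and resolvability of $S(2,3,9)$, and the precise meaning of ``$n$ large enough'' in the $n\equiv 4\bmod 12$ family, so that the claim is made only in the range covered by Keevash's theorem and no sporadic small value is inadvertently asserted.
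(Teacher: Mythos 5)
Your proposal is correct and follows exactly the route the paper intends: Corollary~\ref{c:nw3} applied to the $2$-resolvable systems of Baker, Teirlinck, and Keevash for the first three families, and for $n=10$ the observation that every derived $S(2,3,9)$ is the (unique, resolvable) affine plane of order $3$. This matches the paper's own justification, which is given in the paragraph preceding the corollary rather than as a formal proof.
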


\begin{remark}\label{r:34}
In the case of $S(3,4,n)$,
\eqref{eq:3wn} turns to
$q_0(3,4,n)=n/2$.
In~\cite[Lemma~2.12]{ZhangGe:2010},
$2$-resolvable $S(3,4,n)$
are used to construct optimal
$\param[(q-1)n(n-1)/12]{n}{4}{6}{q}$ codes for
every $q$ in $\{3,\ldots,n/2\}$.
For $q=n/2$, this gives the same
result as in
Corollary~\ref{c:4mod12},
except the case $n=10$. However, even for $w=4$,
Corollary~\ref{c:nw3} can be more general, 
see Conjecture~\ref{conj:4mod6} in the Conclusion.
\end{remark}

\subsection[Bounds for q'0: Arbitrary t]
{Bounds for $q'_0$: arbitrary $t$}\label{s:t}

In this section, we generalize Theorem~\ref{t3}
to arbitrary $t$.
It is hard to expect that the generalization
gives a good evaluation of the real value of $q'_0$ for $t > 3$; however, it is important to conclude that
the trivial bound from Corollary~\ref{c1} is not tight.
It is also a little step towards
the research direction pointed in~\cite{Etzion:2022:CW} as Problem~4.

\begin{theorem}
\label{t4}
If there exists an $S(t,w,n)$ system $S$, $t\ge 3$, then
$$q'_0(t, w , n) \le
\frac{\binom{n-1}{t-1}}{\binom{w-1}{t-1}}-\frac{\binom{n-t+2}{2}}{\binom{w-t+2}{2}}+
\min_{\text{$S$ is $S(t, w , n)$}}
\max_{\text{$D \in S'$}}
\chi(D)+1,$$
where $S'$ is the set of all $S(2,w-t+2 , n-t+2)$ derived from~$S$.
\end{theorem}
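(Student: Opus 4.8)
The plan is to recast the code of Theorem~\ref{t1}(i) as a coloring problem and then economize on colors by reusing them inside derived strength-$2$ subsystems. Throughout, write $\mu=\binom{n-1}{t-1}/\binom{w-1}{t-1}$ for the number of blocks through a point and $\nu=\binom{n-t+2}{2}/\binom{w-t+2}{2}$ for the number of blocks through a fixed $(t-2)$-set (equivalently, the number of blocks of a derived $S(2,w-t+2,n-t+2)$); both counts are given by Lemma~\ref{l2} and depend only on $t,w,n$. Fix a system $S$ attaining the outer minimum and take the codeword supports to be its blocks, so a codeword is specified by a choice of nonzero symbol --- a \emph{color} --- at each incidence $(B,p)$, $p\in B$. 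For two blocks with $i=|B_1\cap B_2|$ whose codewords carry equal symbols in exactly $a$ of the $i$ common positions, the Hamming distance equals $2(w-i)+(i-a)=2w-i-a$; as $i\le t-1$ in a Steiner system, distance $\ge 2w-t+1$ is equivalent to $a\le t-1-i$ for every pair. Hence it is enough to color so that any two codewords agree in at most one common position, and agree only when their supports meet in exactly $t-2$ points.

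To do this, I would first assign to every point $p$ a $(t-2)$-set $\gamma_p\ni p$ so that distinct points get distinct sets; identifying $[n]$ with $\Z_n$ and taking $t-2$ consecutive points, $\gamma_p=\{p,p+1,\dots,p+t-3\}$, does the job (for $t=3$ this is $\gamma_p=\{p\}$, recovering Theorem~\ref{t3}, where $\mu=\nu$). At each point $p$, split the $\mu$ blocks through $p$ into the $\nu$ blocks containing $\gamma_p$ (\emph{type~I}) and the remaining $\mu-\nu$ blocks (\emph{type~II}), and color from two disjoint palettes: type-II blocks receive $\mu-\nu$ pairwise distinct colors, while type-I blocks are colored by a proper coloring of the minimum-distance graph of the derived system $D_{\gamma_p}=S(2,w-t+2,n-t+2)$, which costs $\chi(D_{\gamma_p})\le\max_{D\in S'}\chi(D)$ colors. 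The common palette $\{1,\dots,\mu-\nu+\max_{D\in S'}\chi(D)\}$ is reused at every point.

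The verification is the crux. Since the two palettes are disjoint and type-II colors are distinct, two blocks can agree at $p$ only if both are type~I there, i.e.\ both contain $\gamma_p$, and receive the same color in the coloring of $D_{\gamma_p}$. Because distinct blocks of a strength-$2$ system meet in at most one point, equal colors (non-adjacency in the minimum-distance graph) force $B_1\setminus\gamma_p$ and $B_2\setminus\gamma_p$ to be disjoint, whence $B_1\cap B_2=\gamma_p$ and $i=t-2$. If two agreements occurred, at points $p\ne p'$, then $B_1\cap B_2=\gamma_p=\gamma_{p'}$, contradicting the injectivity of $p\mapsto\gamma_p$. Thus every pair agrees at most once, and only with $i=t-2$, so $a\le t-1-i$ holds throughout and the code has distance $\ge 2w-t+1$; it has the required length, weight, and cardinality over $\Z_q$ whenever $q-1\ge\mu-\nu+\max_{D\in S'}\chi(D)$.

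Minimizing over $S$ (legitimate because $\mu,\nu$ are independent of $S$) yields exactly the stated bound. I expect the main obstacle to be precisely the coordination across the up to $t-1$ shared points of a pair of blocks: a per-point coloring alone could let agreements accumulate and destroy the distance. The mechanism that prevents this is the pairing of an \emph{injective} choice of $(t-2)$-sets with the disjoint-palette split, which forces any agreement to pin the intersection to be precisely $\gamma_p$, so that a second agreement would require two distinct points to own the same $(t-2)$-set.
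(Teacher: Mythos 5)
Your proposal is correct and follows essentially the same route as the paper: the paper also fixes, for each position $i$, the $(t-2)$-set $\Gamma_i$ of consecutive points (mod $n$) containing $i$, splits the blocks through $i$ into those containing $\Gamma_i$ (colored by a proper coloring of the minimum-distance graph of the derived $S(2,w-t+2,n-t+2)$, i.e., your type~I) and the rest (given pairwise distinct colors from a disjoint palette, your type~II), and uses the injectivity of $i\mapsto\Gamma_i$ to pin any agreement to a single position with intersection exactly $\Gamma_i$. The distance accounting and the resulting symbol count $\mu-\nu+\max_{D\in S'}\chi(D)$ match the paper's exactly.
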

\begin{proof}
Assume there exists an $S(t,w,n)$ system $S$ with point set $[n]$ and block set $B=\{\Delta_{1},\ldots,\Delta_{M}\}$, $M={\binom{n}{t}}/{\binom{w}{t}}$. For convenience,
we consider all operations over the point set modulo~$n$.
Denote by~$\Gamma_i$ the set $[i-t+3,i]$ of $t-2$ consequent (modulo $n$) points,
by~$S_i$ the $S(2,w-t+2,n-t+2)$ system on the point set
$[n]\backslash \Gamma_i$
derived from~$S$, and
by~$\lambda_i$ its chromatic number~$\chi(S_i)$, $i=1,\ldots,n$.
Next, we denote by~$B_i$ the subset of~$B$
consisting of all blocks
that include~$\Gamma_i$.
By the definition of~$\chi(S_i)$,
we can partition $B_i$ into $\lambda_i$ disjoint subsets,
\emph{cells}:
$$B_i=B_i^1\cup B_i^2 \cup \ldots \cup B_i^{\lambda_i}$$
such that the intersection of any two blocks
from the same cell is
$\Gamma_i$.
Let $\vc x^j=( x^j_1,\ldots, x^j_n)$, $j=1,\ldots,M$,
be the length-$n$ word
defined by assigning the exact value
in the $i$th position, $i=1,\ldots,n$:
$$
 x^j_i =
\begin{cases}
0, & \text{if $i\notin \Delta_{j}, $}\\
s, & \text{if $i\in \Delta_{j} \in B_i^s $ for some $s\in [\lambda_i]$, }\\
\lambda_i+|\{ l \in \{1,\ldots ,j\}:\ i\in \Delta_l\notin B_i\}|, & \text{if $i\in \Delta_{j} \notin B_i. $}
\end{cases}
$$
We see that $\supp(\vc x^j)=\Delta_j$;
in particular, $\wt(\vc x^j)=w$.
Any two different codewords $\vc x^j$ and $\vc x^l$
coincide either only in positions where both have zeros,
or also in positions where they both have nonzeros.
In the first case, the distance between  $\vc x^j$ and $\vc x^l$
is at least~$2w-t+1$, as desired.
Consider the second case.
In that case, $\vc x^j$ and $\vc x^l$ both have
a positive value~$s$ in the $i$th position,
i.e., $\Delta_{j}$, $\Delta_{l}$ are in the same cell $B_i^s$,
which means $\Delta_{j}\cap \Delta_{l}=\Gamma_i$.
It is easy to see that $\vc x^j$ and $\vc x^l$
coincide only in the $i$th position and in the positions where both have zero;
thus $d(\vc x^j,\vc x^l)= 2w-t+1$.
Besides, according to the definition rule of $\vc x^j$,
we know that all codewords involve
\begin{equation}\label{eq:qqq}
 \binom{n-1}{t-1}\Big/ \binom{w-1}{t-1}-\binom{n-t+2}{2}\Big/ \binom{w-t+2}{2}+
\max_{\text{$D \in S'$}}
\chi(D)
\end{equation}
nonzero symbols,
and $\vc x^1$, \ldots, $\vc x^M$ form
an $\param[{\binom{n}{t}}/{\binom{w}{t}}]{n}{w}{2w-t+1}{q}$
code for any $q$ larger than~\eqref{eq:qqq}.
\end{proof}

\subsection[Bounds for q'0:
S(4,5,11), SQS(14), and SQS(20)]
{Bounds for $q'_0$:
S(4,5,11), SQS(14), and SQS(20)}\label{s:special}

In this section, we evaluate $q'_0(4,5,11)$,
related to a unique $S(4,5,11)$,
derived from the small Witt design $S(5,6,12)$~\cite{Witt:37},
and find $q'_0(3,4,14)$ and $q'_0(3,4,20)$.

\begin{proposition}\label{p:11,5,4}
It holds $9\le q'_0(4,5,11) \le 11$.
\end{proposition}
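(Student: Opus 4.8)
\emph{Reduction.} By the necessity part of Theorem~\ref{t1}, any code realizing $q'_0(4,5,11)$ has its $66$ codeword supports equal to the blocks of the (unique) Steiner system $S(4,5,11)$. For two codewords $\vc x,\vc y$ whose supports meet in $c$ points and which agree (on nonzero values) in $e$ of those common positions, one has $d(\vc x,\vc y)=10-c-e$, so the distance-$7$ requirement is equivalent to $c+e\le 3$. Using Lemma~\ref{l:Pi} one checks that in $S(4,5,11)$ two distinct blocks meet in $1$, $2$, or $3$ points, with $\Lambda_1=15$, $\Lambda_2=20$, $\Lambda_3=30$. Hence the requirement becomes: blocks meeting in $3$ points must carry distinct symbols in all three common coordinates; blocks meeting in $2$ points must differ in at least one of the two common coordinates; and blocks meeting in a single point are unconstrained. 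The whole problem is thus to assign symbols from $\{1,\dots,q-1\}$ to the incidences of $S(4,5,11)$ obeying these two rules, and I must bound the least $q$ for which such an assignment exists.

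\emph{Lower bound $q'_0(4,5,11)\ge 9$.} Fix a point $i$ and read the symbols at coordinate $i$ of the $30$ blocks through $i$. Two such blocks meeting in $3$ points share $i$ together with two further points, i.e.\ their images in the derived system $S(3,4,10)$ on $[11]\setminus\{i\}$ meet in $2$ points and are therefore adjacent in its minimum-distance graph $G$. The first rule forces such blocks to differ at $i$, so the coordinate-$i$ symbols form a proper coloring of $G$, whence $q-1\ge\chi(G)$. The plan is to prove $\chi(G)\ge 8$. Since $G$ has $30$ vertices, it suffices to show that its independence number satisfies $\alpha(G)\le 4$, for then $\chi(G)\ge\lceil 30/\alpha(G)\rceil=8$ and $q\ge 9$. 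An independent set of $G$ is a family of blocks of $S(3,4,10)$ pairwise meeting in at most one point; a short double count (each block covers $6$ of the $\binom{10}{2}=45$ pairs, and at most three blocks of such a family pass through any point, by the resolvability of the derived $S(2,3,9)$) already bounds its size, and I would finish by excluding a family of size $5$: such a family would force every point to lie on exactly two of its blocks and every two of its blocks to meet in exactly one point, i.e.\ it would be a copy of the Johnson configuration $J(5,2)$ embedded in $S(3,4,10)$, which the unique inversive plane of order~$3$ does not contain.

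\emph{Upper bound $q'_0(4,5,11)\le 11$.} Here Corollary~\ref{c1} and Theorem~\ref{t4} only give $q\le 31$ and $q\le 23$, so I would construct a code over $\Z_{11}$ directly. The plan is to exploit the order-$11$ cyclic automorphism of $S(4,5,11)$ (an $11$-cycle in $M_{11}$): the $66$ blocks split into $6$ orbits of size $11$, so a $\Z_{11}$-equivariant symbol assignment is determined by its values on $6$ representative blocks, and the distance conditions collapse to a finite system of constraints among orbit representatives. I would then exhibit concrete symbols from $\{1,\dots,10\}$ on the six base blocks realizing all pairwise distances $\ge 7$, equivalently display the resulting $66\times 11$ array over $\Z_{11}$ and verify the two rules above for the finitely many incident block pairs.

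\emph{Main obstacle.} The decisive, least routine step is the graph-theoretic claim $\alpha(G)\le 4$ (equivalently $\chi(G)\ge 8$) for the minimum-distance graph of $S(3,4,10)$: the crude counting bounds give only $\alpha(G)\le 7$ and $\chi(G)\ge 5$, so one must use the specific structure of the unique inversive plane of order~$3$ to rule out a five-block packing, which I expect to do via the $J(5,2)$ obstruction above (or, failing a clean structural argument, by a direct check on the explicit design). For the upper bound the delicate point is to satisfy the \emph{coupled} two-coordinate conditions coming from the size-$2$ intersections using only $10$ symbols; the surplus over the forced $\chi(G)=8$ colors per coordinate supplies the needed slack, but coordinating it across all $11$ coordinates is exactly what leaves the precise value of $q'_0(4,5,11)$ undetermined within $\{9,10,11\}$.
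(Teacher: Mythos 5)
Your reduction of the lower bound is sound and is exactly the paper's route: fix a coordinate $i$, note that two blocks of the $S(4,5,11)$ meeting in three points must carry distinct symbols at $i$, and conclude that the symbols in column $i$ properly color the minimum-distance graph $G$ of the derived $S(3,4,10)$, so $q-1\ge\chi(G)$. The genuine gap is in how you propose to prove $\chi(G)\ge 8$. Your key claim --- that the unique inversive plane of order $3$ contains no five blocks pairwise meeting in exactly one point (no $J(5,2)$ configuration), whence $\alpha(G)\le 4$ --- is false. Model the plane on $\mathbb{F}_9\cup\{\infty\}$, where $\mathbb{F}_9=\mathbb{F}_3(\omega)$ with $\omega^2=-1$ and the circles are the images of $\mathbb{F}_3\cup\{\infty\}$ under $\mathrm{PGL}(2,9)$. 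The five $4$-sets
\begin{gather*}
\{0,1,2,\infty\},\quad \{\omega,\omega{+}1,\omega{+}2,\infty\},\quad \{2,\omega{+}2,2\omega,2\omega{+}1\},\\
\{1,\omega{+}1,2\omega,2\omega{+}2\},\quad \{0,\omega,2\omega{+}1,2\omega{+}2\}
\end{gather*}
are all circles (each has cross-ratio $2\in\mathbb{F}_3$), are pairwise tangent (every two of them meet in exactly one point), and cover each of the ten points exactly twice --- precisely the $J(5,2)$ configuration you intended to exclude. Hence $\alpha(G)\ge 5$, the bound $\chi(G)\ge\lceil 30/\alpha(G)\rceil$ yields at most $6$, and no independence-number argument can reach $\chi(G)\ge 8$. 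The paper obtains $\chi(G)=8$ as a computational fact about the unique $S(3,4,10)$; your proposal does not establish it, and so the lower bound $q\ge 9$ is not proved.

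The upper bound half has a gap of a different kind: you correctly guess the paper's strategy (a code invariant under an order-$11$ cyclic automorphism of $S(4,5,11)$, determined by six orbit representatives), but the existence of a valid $\Z_{11}$-equivariant symbol assignment is exactly the nontrivial content, and you do not exhibit one. The paper supplies explicit representatives (0000a100615, 00080901087, 00103009406, 00022013005, 00070040726, 0000035480a) whose $66$ cyclic shifts form the required $\param[66]{11}{5}{7}{11}$ code. As written, neither inequality of the proposition is actually established by your argument.
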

\begin{proof}
Computing $\chi(S')$ for the unique
(up to equivalence) $S(3,4,10)$,
which is derived from $S(4,5,11)$,
gives
$ q'_0(4,5,11) \ge \chi(S')+1 = 9 $.

An example of a $\param[66]{11}{5}{7}{11}$ code, which shows that $q'_0(4,5,11) \le 11$,
is the cyclic code
with representatives
0000a100615, 00080901087, 00103009406, 00022013005, 00070040726, 0000035480a
(each representative generates $11$ codewords by cyclic shifting).
\end{proof}

\begin{proposition}\label{p:n,4,3}
The values $q'_0(3,4,14)$ and $q'_0(3,4,20)$ equal $9$ and $11$, respectively.
\end{proposition}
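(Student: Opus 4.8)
The plan is to apply the formula \eqref{eq:mnmx} of Theorem~\ref{t3} with $w=4$: here $q'_0(3,4,n)=\min_S\max_{D\in S'}\chi(D)+1$, where $S$ ranges over the $S(3,4,n)$ systems and $S'$ is the family of derived $S(2,3,n-1)$ systems. For an $S(2,3,m)$ the minimum-distance graph joins two blocks that share a point, so $\chi(D)$ is the least number of partial parallel classes (sets of pairwise disjoint triples) partitioning the $m(m-1)/6$ blocks of $D$; as a partial parallel class has at most $\lfloor m/3\rfloor$ triples, we get the counting bound $\chi(D)\ge\lceil \frac{m(m-1)/6}{\lfloor m/3\rfloor}\rceil$. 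Since neither $13$ nor $19$ is divisible by $3$, the derived systems are not resolvable, so these chromatic numbers will exceed the resolvable value $n/2-1$ of Corollary~\ref{c:nw3}, which is precisely why the answers sit above $n/2$.

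For $n=20$ the lower bound is immediate: every $S(2,3,19)$ has $57$ blocks and each partial parallel class holds at most $6$ of them, so $\chi(D)\ge\lceil 57/6\rceil=10$ for every derived $D$ and every starting system; hence $\max_D\chi(D)\ge 10$ always, giving $q'_0(3,4,20)\ge 11$. For the matching upper bound it suffices, by \eqref{eq:mnmx}, to produce one $S(3,4,20)$ each of whose derived $S(2,3,19)$ admits a partition into ten partial parallel classes (forced to be nine near-parallel classes of six disjoint triples together with one class of three); no chromatic-number computation is needed here, only the exhibited colourings, which are trivially verified. To make this manageable I would use a highly symmetric, ideally point-transitive, $S(3,4,20)$, so that all $20$ derived systems are isomorphic and a single displayed $10$-colouring settles the bound $q'_0(3,4,20)\le 11$, and thus equality.

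For $n=14$ the counting bound only yields $\chi(D)\ge\lceil 26/4\rceil=7$, one short of the claimed value, so the number $8$ cannot come from an inequality of this kind. I would exploit that there are exactly two non-isomorphic $S(2,3,13)$ systems and evaluate the partial-parallel-class chromatic number of each. The upper bound $\max_D\chi(D)\le 8$ is automatic as soon as neither $S(2,3,13)$ exceeds chromatic number $8$, so $q'_0(3,4,14)\le 9$; and if both in fact equal $8$, then every derived system of every $S(3,4,14)$ has $\chi=8$ and the matching lower bound $q'_0(3,4,14)\ge 9$ follows at once. If instead one of the two systems were $7$-colourable, the lower bound would reduce to ruling out an $S(3,4,14)$ all fourteen of whose derived systems are of that type, which is decidable because only finitely many $S(3,4,14)$ exist (there are exactly four).

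The step I expect to be the main obstacle is the lower bound for $n=14$. The trivial counting bound is short by one, so proving that chromatic number $7$ is never realised throughout an $S(3,4,14)$ amounts to establishing the \emph{non-existence} of a $7$-colouring of the relevant derived $S(2,3,13)$'s; unlike exhibiting a colouring, this cannot be checked by displaying an object and instead rests on the finiteness of the $S(2,3,13)$ (and, if needed, $S(3,4,14)$) classifications together with an exhaustive, computer-assisted verification. The construction for $n=20$ is a secondary difficulty: exhaustive search over $S(3,4,20)$ is hopeless, so the symmetric system and its near-resolution must be supplied explicitly.
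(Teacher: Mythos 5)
Your plan coincides with the paper's proof: for $n=20$ it uses the same counting bound $\chi(D)\ge\lceil 57/6\rceil=10$ together with a cyclic SQS$(20)$ (so that all derived STS$(19)$ are isomorphic) whose derived system is explicitly partitioned into ten partial parallel classes, and for $n=14$ it rests on the same computational fact that both STS$(13)$ have chromatic number $8$, after which Theorem~\ref{t3} gives $q'_0(3,4,14)=9$. The only missing ingredients are the explicit data (the Phelps cyclic SQS$(20)$ and its $10$-colouring, and the exhaustive check for the two STS$(13)$), which you correctly identify as the computer-assisted content of the proof.
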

\begin{proof}
There are two STS$(13)$, up to permutation of points, and each of them
has chromatic number~$8$ (a computational result). 
By Theorem~\ref{t3}, we have $q'_0(3,4,14) = 8+1$.

The number of blocks in an STS$(19)$ $S'$ is $57$, and the maximum number
of mutually disjoint blocks is $\lfloor 19/3 \rfloor = 6$.
So, $\chi(S')\ge \lceil 57/6 \rceil = 10$ and hence $q'_0(3,4,20)\ge 11$ by Theorem~\ref{t3}.

Consider the cyclic SQS$(20)$ (II.A.1) from~\cite{Phelps:80:SQS20}.
It consists of the blocks
    $\{1,6,11,16\}$,
    $\{1,3,11,13\}$,
    $\{1,5,11,15\}$,
    $\{1,4,9,17\}$,
    $\{1,4,7,14\}$,
    $\{1,2,3,12\}$,
    $\{1,5,7,10\}$,
    $\{1,3,7,8\}$,
    $\{1,3,4,10\}$,
    $\{1,6,7,9\}$,
    $\{1,3,9,14\}$,
    $\{1,2,6,10\}$,\linebreak[4]
    $\{1,2,9,18\}$,
    $\{1,2,7,13\}$,
    $\{1,2,5,14\}$,
    $\{1,4,6,8\}$
    and all their cyclic shifts
    $\{a_1,a_2,a_3,a_4\} \to \{a_1+i,a_2+i,a_3+i,a_4+i\}$,
    where $+$ is modulo $20$.
Because of the cyclicity, all derived STS are isomorphic,
and we consider the derived STS $S'$ on the point set $\{1,\ldots,19\}$.
The block set of~$S'$ is partitioned into $10$
cells with mutually disjoint blocks in each cell:

\{\{1, 2, 11\}, \{3, 5, 7\}, \{4, 10, 14\}, \{6, 17, 19\}, \{8, 12, 15\}, \{13, 16, 18\}\},

\{\{1, 3, 15\}, \{2, 10, 12\}, \{5, 13, 17\}, \{6, 11, 18\}, \{7, 16, 19\}, \{8, 9, 14\}\},

\{\{1, 4, 13\}, \{2, 15, 18\}, \{3, 12, 19\}, \{5, 6, 8\}, \{7, 14, 17\}, \{9, 10, 11\}\},

\{\{1, 5, 9\}, \{2, 8, 13\}, \{3, 14, 18\}, \{4, 7, 12\}, \{6, 10, 16\}, \{11, 15, 17\}\},

\{\{1, 6, 12\}, \{2, 4, 17\}, \{3, 8, 16\}, \{5, 10, 15\}, \{9, 18, 19\}, \{11, 13, 14\}\},

\{\{1, 7, 18\}, \{2, 5, 16\}, \{3, 10, 17\}, \{4, 8, 19\}, \{6, 14, 15\}, \{9, 12, 13\}\},

\{\{1, 8, 17\}, \{2, 14, 19\}, \{3, 6, 13\}, \{4, 5, 18\}, \{7, 9, 15\}, \{11, 12, 16\}\},

\{\{2, 6, 7\}, \{3, 4, 11\}, \{5, 12, 14\}, \{8, 10, 18\}, \{9, 16, 17\}, \{13, 15, 19\}\},

\{\{1, 10, 19\}, \{2, 3, 9\}, \{4, 15, 16\}, \{7, 8, 11\}, \{12, 17, 18\}\},

\{\{1, 14, 16\}, \{4, 6, 9\}, \{5, 11, 19\}, \{7, 10, 13\}\}.

So, $\chi(S')=10$ and $q'_0(20,4,3) = \chi(S') + 1 = 11$.
\end{proof}

\subsection[Bounds for q''0]{Bounds for $q''_0$}\label{s'}

The trivial upper bound on $q''_0(t,k,n)$ in Corollary~\ref{c1} comes from a construction
where in each column of the code matrix all nonzero symbols are distinct.
Merging some symbols results in decreasing the distance between some codewords.
However, this does not always lead to decreasing the code distance,
and in general a code with the same parameters and a smaller number of symbols used
can be found.
This is not the case if $t=1$ because the distance between any two codewords
in the trivial construction already coincides with the minimum distance of the code.

\begin{corollary}
$q''_0(1,k,n)=n/k$, i.e., an $\param[n/k]{n}{n-k}{n}{q}$ code exists if and only if
$k$ divides $n$ (necessary and sufficient condition for the existence of $S(1,k,n)$)
and $q\ge n/k$.
\end{corollary}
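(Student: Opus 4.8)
The plan is to sandwich $q''_0(1,k,n)$ between the trivial upper bound and a rigidity lower bound, both equal to $n/k$. For the upper bound I would specialize Corollary~\ref{c1} at $t=1$: since $\binom{n}{1}/\binom{k}{1}=n/k$ and $\binom{n-1}{0}/\binom{k-1}{0}=1$, the right-hand side collapses to $n/k$, giving $q''_0(1,k,n)\le n/k$ whenever an $S(1,k,n)$ exists. An $S(1,k,n)$ is precisely a partition of $[n]$ into $n/k$ blocks of size $k$, which exists exactly when $k\mid n$; hence the sufficiency direction of the equivalence follows from Theorem~\ref{t1}(ii).

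For necessity, I would suppose that an $\param[n/k]{n}{n-k}{n}{q}$ code $\C$ exists and exploit the fact that the minimum distance equals the length $n$, so any two distinct codewords differ in every coordinate. First I would pin down the support structure: two codewords that were both zero in some coordinate would agree there, contradicting distance $n$, so at most one codeword vanishes in each of the $n$ coordinates. On the other hand each of the $n/k$ codewords has exactly $k$ zeros, for a total of $(n/k)\cdot k=n$ zero entries. Comparing this count with the ``at most one zero per column'' bound forces exactly one zero in each column, so the zero-sets (the complements of the supports) are pairwise disjoint $k$-subsets covering $[n]$, i.e.\ they form an $S(1,k,n)$. In particular $k\mid n$, recovering half of the stated condition and matching the necessity-of-a-Steiner-system assertion at the end of Theorem~\ref{t1}.

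It then remains to extract $q\ge n/k$. I would fix a coordinate $i$: exactly one codeword is zero there, so the other $n/k-1$ codewords are all nonzero at position $i$, and since the distance is $n$ they must pairwise differ in coordinate $i$ as well. Their $i$th entries are therefore $n/k-1$ pairwise distinct nonzero symbols, and as only $q-1$ nonzero symbols are available this yields $q-1\ge n/k-1$, i.e.\ $q\ge n/k$. Together with the upper bound this gives $q''_0(1,k,n)=n/k$.

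The argument is rigid once the distance-equals-length observation is in hand, so I do not expect a real obstacle. The only step needing slight care is the double count of zeros that upgrades ``at most one zero per column'' to ``exactly one zero per column'': this is what simultaneously forces the partition structure and leaves no room to merge symbols, thereby explaining the remark preceding the corollary that, unlike for $t\ge 2$, merging cannot reduce $q$ in the case $t=1$.
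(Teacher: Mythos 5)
Your proof is correct and follows essentially the route the paper intends: the upper bound is Corollary~\ref{c1} specialized to $t=1$, and the lower bound is the rigidity forced by distance equal to length, which the paper only sketches in the paragraph preceding the corollary ("the distance between any two codewords in the trivial construction already coincides with the minimum distance"). Your double count of zeros, pinning down exactly one zero per column and hence $n/k-1$ pairwise distinct nonzero symbols in each column, is the correct way to make that remark into a lower-bound argument for an arbitrary code, and no step is missing.
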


In the following proposition,
we estimate the number of symbol pairs that can be merged if $t>1$.
The formula there looks a bit complicate,
but as we see from the corollaries below,
for special cases it becomes much simpler.

\begin{proposition}\label{q''}
If there exists an
$\param[M]{n}{w}{n-t+1}{q}$
code $\C$, where $w=n-k$ and $M=\binom{n}{t} \big/ \binom{k}{t}$, then
\begin{itemize}
 \item [\rm(i)]
 $\C$ is a diameter perfect code in $\J_q(n,w)$;
 \item [\rm(ii)]
 the supports of the codeword
 of $\C$
are the complements
of the blocks of an $S(t,k,n)$
system $S$;
 \item [\rm(iii)] it holds
\begin{eqnarray*}
(q-1) \binom{a}{2}+ba
&\le&
\lfloor\widetilde P\rfloor,
\qquad
\text{where}
\\
\displaystyle a
&=&
\left\lfloor{R/(q-1)}\right\rfloor
,
\quad
 b
=
 R - a (q-1),
\\
R
&=&
\binom{n}{t} \big/ \binom{k}{t} - \binom{n-1}{t-1} \big/ \binom{k-1}{t-1},
\\
\displaystyle \widetilde P
&=&
 \frac{M}{2n}\sum _{i=0}^{t-2}{(t-1-i) \Lambda_{i}(t,k,n)} ,
\end{eqnarray*}
and
$\Lambda_{i}(t,k,n)=\binom{k}{i} \lambda_{i,k-i}$
is from Lemma~\ref{l:Pi};
\item [\rm(iii')] in particular,\quad
$
\displaystyle
q-1 \ge R -
 \lfloor\widetilde P\rfloor.
$
\end{itemize}
\end{proposition}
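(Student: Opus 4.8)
The proposition packages several facts about a putative
$\param[M]{n}{w}{n-t+1}{q}$ code with $w=n-k$ and
$M=\binom{n}{t}/\binom{k}{t}$. Parts (i) and (ii) are essentially
read off from earlier results, so the real work is the counting
inequality in~(iii), from which (iii$'$) follows by specialising to
$a=1$, $b=0$ (or rather by a one-line monotonicity argument). My plan is
to treat (i), (ii) quickly, then set up (iii) as a double count of
\emph{coinciding nonzero symbol pairs} across columns of the code matrix
and bound that count from two sides.

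\emph{Parts (i) and (ii).} For~(i), I would invoke
Theorem~\ref{th:anti''}: since $t+w\le n\le 2w+t$ holds here (these are
exactly the admissibility conditions, as $w=n-k$ and $0<t\le k$), the
anticode $\Ab(n,w,t)$ has diameter $n-t=d-1$ and size
$\binom{n-t}{w}(q-1)^w$; multiplying by
$M=|\J_q(n,w)|/|\Ab(n,w,t)|$ (a routine identity,
$\binom{n}{w}(q-1)^w / \binom{n-t}{w}(q-1)^w=\binom{n}{t}/\binom{k}{t}=M$)
shows $\C$ attains the code--anticode bound, so $\C$ is diameter perfect.
For~(ii), the distance $n-t+1$ forces any two codeword supports to share
at most $t-1$ \emph{zero} positions, i.e.\ their complements meet in at
most $t-1$ points; since there are exactly $M$ codewords, Lemma~\ref{l2}
(the block count of a Steiner system) forces the complements to be the
blocks of an $S(t,k,n)$. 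This is the same reasoning as in the proof of
Theorem~\ref{t1}(ii), run in reverse.

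\emph{Part (iii), the heart.} Fix a column $i\in[n]$. By
Lemma~\ref{l2} the number of codewords nonzero in column~$i$ is
$R=M-\binom{n-1}{t-1}/\binom{k-1}{t-1}$. Call a pair of codewords
\emph{column-$i$-coincident} if both carry the \emph{same} nonzero symbol
in position~$i$. The plan is to compute the total number
$P=\sum_i(\text{coincident pairs in column }i)$ in two ways. \textbf{Upper
side:} a coincidence in column~$i$ between $\vc x^j,\vc x^l$ means their
supports (the block complements) agree at the nonzero symbol, so the two
$S(t,k,n)$ blocks $\supp(\vc x^j)^c,\supp(\vc x^l)^c$ share exactly the
point~$i$ plus possibly others — careful bookkeeping via
Lemma~\ref{l:Pi} counts, for each block, how many blocks meet it in
$i<t-1$ points and hence how many forced coincidences arise, yielding
the weighted sum $\widetilde P=\frac{M}{2n}\sum_{i=0}^{t-2}(t-1-i)
\Lambda_i(t,k,n)$ as an \emph{upper} bound on~$P$ (the floor appears
because $P$ is an integer). \textbf{Lower side:} within a single column,
the $R$ nonzero entries use at most $q-1$ symbols, so by the pigeonhole
principle the symbol multiplicities $m_1,\dots,m_{q-1}$ sum to $R$ and the
number of coincident pairs in that column is $\sum_s\binom{m_s}{2}$;
Lemma~\ref{l:KB} says this is minimised by the balanced split
$m_s\in\{a,a+1\}$ with $a=\lfloor R/(q-1)\rfloor$, giving the lower bound
$(q-1)\binom{a}{2}+ba$ \emph{per column}. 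Summing the lower bound over
all $n$ columns and the upper bound likewise — or, more cleanly, noting
both sides are per-column-uniform by the symmetry of the Steiner system —
gives $(q-1)\binom{a}{2}+ba\le\lfloor\widetilde P\rfloor$, which is~(iii).
Finally (iii$'$) is the case $a=1,b=0$: since any column has $R$ nonzero
symbols drawn from $q-1$ values, at least $R-(q-1)$ coincidences are
unavoidable ($\binom{R}{2}$ minus the at-most-$(q-1)$ distinct-valued
slots), giving $R-(q-1)\le\lfloor\widetilde P\rfloor$, i.e.\
$q-1\ge R-\lfloor\widetilde P\rfloor$.

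\emph{The main obstacle.} The delicate step will be the \emph{upper}
count of $P$ via Lemma~\ref{l:Pi}: I must verify that the distance
condition $d=n-t+1$ is tight exactly when two block complements meet in
fewer than $t-1$ points, and that each such pair contributes a
coincidence \emph{forced} by the weight/distance constraints — the factor
$(t-1-i)$ and the normalisation $\frac{M}{2n}$ must be derived by
carefully attributing each coincident pair to the columns where it
occurs and averaging over the $n$ columns. Getting the combinatorial
identity between "pairs of blocks meeting in $i$ points" and
"coincidences over all columns" exactly right — including the factor
accounting for a pair being counted once per shared point — is where a
bookkeeping slip is most likely, so I would check it against the $t=2,3$
special cases already established in Theorems~\ref{t2} and~\ref{t3}.
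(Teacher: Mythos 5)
Your proposal is correct and follows essentially the same route as the paper: parts (i) and (ii) are read off from the anticode cardinality and the block-intersection property, and part (iii) is the same double count of same-symbol ``collisions'' within columns, bounded above on average via Lemma~\ref{l:Pi} (each pair of blocks meeting in $i$ points yields at most $t-1-i$ collisions) and below per column via Lemma~\ref{l:KB}. The only slips are cosmetic: (iii$'$) is the case $a=1$ (where $b=R-(q-1)$), not $a=1,b=0$, and $\widetilde P$ bounds the per-column average rather than the total $P$ --- neither affects the argument.
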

\begin{proof}
 (i) Since $\C$ meets the code--anticode bound
 with the anticode
 $\Ab(n,n-k,t)$, it is
 diameter perfect.

 (ii) The cardinality of the code
 coincides the number of blocks in
 $S(t,k,n)$.
 Each codeword has exactly $k$ zeros.
 Each two codewords
 have at most $t-1$ common position
 with a zero.
 The claim is now straightforward.

 (iii)
The code matrix is an
$M \times n$
matrix. Each row contains $k$ zeros and $n-k$ nonzeros.
Each column contains $\binom{n-1}{t-1} \big/ \binom{k-1}{t-1}$
zeros (Lemma~\ref{l2}) and
$$R=\binom{n}{t} \big/ \binom{k}{t} - \binom{n-1}{t-1} \big/ \binom{k-1}{t-1} =
 \frac{(n-1)!(k-t)!(n-k)}{(n-t)!k!} $$
 nonzeros.
 The set of zero positions of each row
 corresponds to some block of~$S$.

We will say that we have
a \emph{collision} $(a,b)$
if two
different cells $a$, $b$
in the same column contain
the same nonzero value.
Let us evaluate the maximum number
of collisions.
If two rows correspond
to  blocks with $i$
common positions, then they have
at most $t-1-i$ collisions because
the code distance is $n-t+1$.
We denote the number of pairs of
such blocks by $P_i$.
Utilizing Lemma~\ref{l:Pi} we find that
$P_i=\frac{M}2   \cdot  \Lambda_i(t,k,n)$.
%$$P_0=\frac12 \cdot M
%\cdot \frac{n-t+1}{k-t+1} \cdot\bigg( \binom{k}{t-1}-1 \bigg),$$
%$$P_1=\frac12 \cdot M
%\cdot\bigg( \binom{n-t+2}{2} \bigg/ \binom{k-t+2}{2}  - 1 \bigg) - P_0,$$
%and for $2 \le i \le t-1$, we have
%$$P_i=\frac12 \cdot M
%   \cdot \bigg( \binom{n-t+i+1}{i+1} \bigg/ \binom{k-t+i+1}{i+1} - \binom{n-t+i}{i} \bigg/ \binom{k-t+i}{i} - 1 \bigg).$$
So, the number of collisions
does not exceed $P=\sum _{i=1}^{t-1}(t-1-i)P_i$.
The average number of collisions
in one column does not exceed $\widetilde P$,
where $ \widetilde P = P/n $
and there is a column with at most
$\lfloor \widetilde P \rfloor$ collisions.

On the other hand,
the number of nonzeros in each column is
$R$,
while the number of different nonzero symbols
is $q-1$.
With those constants,
the minimum number of collisions
is possible if each nonzero symbol
occurs $a$ or $a+1$ times (see Lemma~\ref{l:KB}),
where $a = \lfloor R / (q-1) \rfloor$.
This gives the left part of the inequality in (iii). (iii') coincides with (iii)
if $a=1$ (i.e., when $\widetilde P$ is not more than the half of the number of ones
in a column of the code matrix); otherwise (iii') is weaker.
\end{proof}

\begin{corollary}\label{p:2''}
In the case $t=2$, Proposition~\ref{q''} holds
with
$$ \displaystyle R= \frac{(n-1)(n-k)}{k(k-1)} \quad
\text{and }
\displaystyle \widetilde P = \frac{(n-1)\big((n-k^2)(n-1)+k(k-1)^2\big)}{2 k^2(k-1)^2} .$$
\end{corollary}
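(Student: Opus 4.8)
The plan is to obtain Corollary~\ref{p:2''} by direct specialization of the expressions in Proposition~\ref{q''} to $t=2$; all the structural claims (diameter-perfectness, the Steiner correspondence, the collision-counting inequality) are already furnished by that proposition, so the only task is to verify that substituting $t=2$ into the formulas for $R$ and $\widetilde P$ yields the two stated closed forms. The work is thus a purely finite algebraic computation, and the one mild subtlety is evaluating $\Lambda_0(2,k,n)$ and recognizing that the simplified numerator agrees with the form written in the statement.

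First I would handle $R$. Setting $t=2$ in $R=\binom{n}{t}\big/\binom{k}{t}-\binom{n-1}{t-1}\big/\binom{k-1}{t-1}$ gives $R=\frac{n(n-1)}{k(k-1)}-\frac{n-1}{k-1}$. Putting both terms over the common denominator $k(k-1)$ and factoring $(n-1)$ out of the numerator produces $R=\frac{(n-1)(n-k)}{k(k-1)}$, exactly as claimed.

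Next I would treat $\widetilde P=\frac{M}{2n}\sum_{i=0}^{t-2}(t-1-i)\Lambda_i(t,k,n)$. For $t=2$ the summation index runs only over $i=0$, with coefficient $t-1-i=1$, so the sum collapses to the single term $\Lambda_0(2,k,n)$ and $\widetilde P=\frac{M}{2n}\Lambda_0(2,k,n)$. It then remains to compute $\Lambda_0(2,k,n)$, the number of blocks of an $S(2,k,n)$ disjoint from a fixed block $B$. Here I would argue combinatorially rather than unwinding the recursion of Lemma~\ref{l:Pi}: in an $S(2,k,n)$ any block distinct from $B$ meets $B$ in at most one point, since two points determine a unique block; through each point of $B$ there pass $r=\frac{n-1}{k-1}$ blocks (Lemma~\ref{l2}), one of which is $B$, giving $r-1$ further blocks that meet $B$ in exactly that point, so summing over the $k$ points of $B$ yields $k(r-1)$ blocks meeting $B$ in one point. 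Hence $\Lambda_0(2,k,n)=M-1-k(r-1)$, with $M=\frac{n(n-1)}{k(k-1)}$.

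Finally I would substitute $M$, $r$, and $\Lambda_0(2,k,n)$ into $\widetilde P=\frac{M}{2n}\Lambda_0(2,k,n)$ and simplify over the denominator $2k^2(k-1)^2$. The cancellation of the factor $n$ coming from $M$ leaves $(n-1)$ in front, and the resulting numerator reduces to a polynomial in $n$ and $k$. The only place where care is needed is to check that this polynomial equals $(n-k^2)(n-1)+k(k-1)^2$; expanding both expressions and comparing coefficients confirms the identity and finishes the proof. I do not anticipate a genuine obstacle: no new combinatorial input beyond $\Lambda_0$ is required, and the difficulty is confined to routine algebraic bookkeeping.
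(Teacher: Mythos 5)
Your proposal is correct, and your computation of $R$ and the collapse of the sum to $\widetilde P=\frac{M}{2n}\Lambda_0(2,k,n)$ coincide with the paper's. Where you differ is in evaluating $\Lambda_0(2,k,n)$: the paper unwinds the recursion of Lemma~\ref{l:Pi}, deriving $\lambda_{t-1,j}$ and $\lambda_{t-2,j}$ in general and then substituting $t=2$, $j=k$, whereas you count directly that a block of an $S(2,k,n)$ distinct from a fixed block $B$ meets $B$ in at most one point, giving $\Lambda_0=M-1-k(r-1)$ with $r=\frac{n-1}{k-1}$. Your route is shorter and, as it happens, safer: the paper's unwinding contains a sign slip, namely $\lambda_{t-2,j}=\lambda_{t-2,j-1}-\lambda_{t-1,j-1}$ iterates to $M-j\,\frac{n-t+1}{k-t+1}+(j-1)$, not $-(j-1)$ as written, so the paper's displayed $\lambda_{0,k}=\frac{n(n-1)-k^2(n-1)-k(k-1)^2}{k(k-1)}$ should read $\frac{(n-k^2)(n-1)+k(k-1)^2}{k(k-1)}$. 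Your count $M-1-k(r-1)=M-kr+(k-1)$ gives exactly this corrected value (check on the Fano plane: $7-1-3\cdot 2=0$, while the paper's expression would give $-4$), and multiplying by $\frac{M}{2n}=\frac{n-1}{2k(k-1)}$ yields precisely the $\widetilde P$ in the statement, which is also the value used later in Corollary~\ref{c:affine}. So the stated formulas are right, your proof of them is right, and the only caveat is that you should indeed carry out the final expansion you defer to ``routine bookkeeping''; it works out with the $+k(k-1)^2$ sign.
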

\begin{proof}
 Substituting $t=2$, we get
 $$M=\frac{n(n-1)}{k(k-1)},\quad
 R=\frac{(n-1)(n-k)}{k(k-1)},
 \quad\text{and }
 \widetilde P
  =  \frac{M}{2n} \cdot \lambda_{0,k} .$$
  From the definition of $\lambda_{i,j}$
  in Lemma~\ref{l:Pi}, we find that for $j\ge 1$
  $$
  \lambda_{i,j}=\lambda_{i,j-1}-\lambda_{i+1,j-1}=0 \qquad \text{if }
  i\ge t;
  $$
   $$
  \lambda_{t-1,j}=\lambda_{t-1,j-1}-\lambda_{t,j-1}=\frac{n-t+1}{k-t+1}-1;
  $$
   \begin{multline*}
    \lambda_{t-2,j}=\lambda_{t-2,j-1}-\lambda_{t-1,j-1} \\ =
  \frac{(n-t+2)(n-t+1)}{(k-t+2)(k-t+1)}-
  j\cdot \frac{n-t+1}{k-t+1} - (j-1).
   \end{multline*}
   Substituting $t=2$ and $j=k$, we get
   \begin{multline*}
   \lambda_{0,k}
   =
  \frac{n(n-1)}{k(k-1)}-
   k \cdot \frac{n-1}{k-1} - (k-1)
   =
  \frac{n(n-1)-{k^2(n-1)} -k(k-1)^2}{k(k-1)}.
   \end{multline*}
   It is easy to check now that
   $ \frac{M}{2n} \lambda_{0,k} $
   is exactly the expression for $\widetilde P$
   from the claim of the corollary.
\end{proof}

The next claim is about
Steiner systems related to affine planes.
\begin{corollary}\label{c:affine}
If $k$ is a prime power, then
 $ q''_0(2,k,k^2) = k^2  - \lfloor \frac{k-1}{2}\rfloor $.
\end{corollary}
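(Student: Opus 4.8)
The plan is to prove the two inequalities $q''_0(2,k,k^2)\ge k^2-\lfloor\frac{k-1}{2}\rfloor$ and $q''_0(2,k,k^2)\le k^2-\lfloor\frac{k-1}{2}\rfloor$ separately. The lower bound is a direct specialization of Proposition~\ref{q''}; the upper bound requires an explicit construction of a code from the affine plane of order $k$, which exists because $k$ is a prime power, and this is where the real work lies.

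For the lower bound, note that $S(2,k,k^2)$ is exactly an affine plane of order $k$, so any code with the stated parameters satisfies the hypotheses of Proposition~\ref{q''} with $t=2$, $n=k^2$, and weight $w=n-k=k^2-k$. I would invoke Corollary~\ref{p:2''}: substituting $n=k^2$ kills the term $(n-k^2)(n-1)$, leaving $R=k^2-1$ and $\widetilde P=\frac{k^2-1}{2k}$. A one-line case split ($k=2m$ versus $k=2m+1$) gives $\lfloor\widetilde P\rfloor=\lfloor\frac{k-1}{2}\rfloor$. Then part~(iii') of Proposition~\ref{q''} yields $q-1\ge R-\lfloor\widetilde P\rfloor=k^2-1-\lfloor\frac{k-1}{2}\rfloor$, i.e.\ $q''_0\ge k^2-\lfloor\frac{k-1}{2}\rfloor$; since $2\lfloor\frac{k-1}{2}\rfloor\le k-1<R$ forces $a=\lfloor R/(q-1)\rfloor=1$ at the critical alphabet size, (iii') already coincides with (iii) and nothing stronger is lost.

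For the upper bound I would realize the code on $AG(2,k)$, whose points are $\F_k^2$ and whose lines split into $k+1$ parallel classes (one per ``direction'' $m\in\F_k\cup\{\infty\}$), each class having $k$ lines. By Proposition~\ref{q''}(ii) the codewords are forced to be indexed by lines, with a zero in coordinate $p$ precisely when $p$ lies on the line; so the task reduces to assigning nonzero symbols $\sigma(\ell,p)$ for $p\notin\ell$ using as few symbols as possible. Reading the distance-$(k^2-1)$ condition as in the proof of Proposition~\ref{q''} shows that, inside a fixed column $p$, two lines may share a symbol only if they are parallel, and that each unordered pair of parallel lines may ``collide'' (share a symbol in a common coordinate) in at most one column. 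Since $R=k^2-1$ nonzeros must be colored with only $q-1=k^2-1-\lfloor\frac{k-1}{2}\rfloor$ symbols, each column needs at least $\lfloor\frac{k-1}{2}\rfloor$ collisions, and by Lemma~\ref{l:KB} the economical realization is to merge exactly $\lfloor\frac{k-1}{2}\rfloor$ disjoint pairs of parallel lines per column and keep all other symbols distinct. Pairs taken from two different directions are automatically line-disjoint, so the ``matching'' condition inside a column is free as long as each direction contributes at most one merged pair; the whole construction therefore reduces to the packing problem of choosing, for each point $p$, a set $\mathcal M_p$ of $\lfloor\frac{k-1}{2}\rfloor$ parallel pairs avoiding $p$, pairwise line-disjoint, so that no pair is reused by a second point.

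I would settle this packing cleanly for odd $k$ and flag even $k$ as the main obstacle. For the partition step, color each point $(x,y)$ by the direction $m=x+y$ when $x+y\ne-1$ and by the vertical direction when $x+y=-1$; one checks that each color class is a transversal of its parallel class (it meets each line of that direction exactly once), giving a partition of $\F_k^2$ into $k$ transversals. For odd $k$, fix a direction $m$ with transversal $T_m$ and a near-one-factorization $M_0,\dots,M_{k-1}$ of the complete graph $K_k$ on the $k$ lines of direction $m$ (with $M_i$ missing line $i$); to the unique point of $T_m$ lying on line $i$ I assign the $\frac{k-1}{2}$ pairs of $M_i$. These avoid $p$, fill its quota exactly, and are globally distinct since the $M_i$ are edge-disjoint within a direction and different directions give disjoint pair-sets; coloring each merged pair by one shared symbol and everything else distinctly then produces a code over exactly $k^2-\lfloor\frac{k-1}{2}\rfloor$ symbols. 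The even case is the delicate point: there $k-1$ is odd, a point's quota $\frac{k}{2}-1$ no longer exhausts a near-factor, and serving all $k$ lines of one direction from a single factorization over-commits the edges at one vertex. I expect to resolve it either by distributing a point's $\frac{k}{2}-1$ merges over several directions (each contributing one pair, which is automatically safe) and verifying feasibility through a Hall/flow argument on the bipartite incidence of points versus parallel pairs, or by exhibiting an explicit edge-disjoint family of matchings $N_0,\dots,N_{k-1}$ in $K_k$ with $N_i$ avoiding vertex $i$ (such a family already exists for $k=4$); confirming one of these packings for all even $k$ is the step I expect to be hardest.
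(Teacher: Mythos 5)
Your lower bound is exactly the paper's: specialize Corollary~\ref{p:2''} and Proposition~\ref{q''}(iii') to $n=k^2$, getting $R=k^2-1$ and $\lfloor\widetilde P\rfloor=\lfloor\frac{k-1}{2}\rfloor$, hence $q-1\ge k^2-1-\lfloor\frac{k-1}{2}\rfloor$; your observation that $a=1$ at the critical alphabet size, so that (iii') loses nothing against (iii), is correct. Your reduction of the upper bound to a collision-packing problem (each column needs $\lfloor\frac{k-1}{2}\rfloor$ disjoint merged pairs, only parallel lines may share a symbol, and each parallel pair may collide in at most one column) is also the reduction the paper performs. For odd $k$ your construction differs from the paper's in an interesting way: you partition the points into $k$ transversals of $k$ of the $k+1$ parallel classes and give each point all $\frac{k-1}{2}$ pairs of one near-factor of $K_k$ inside its own direction, whereas the paper gives each point one pair from each of $\frac{k-1}{2}$ distinct directions $y_0+s$, $s\in S$, via the explicit lines \eqref{eq:line1}--\eqref{eq:line2}, with the condition $|S\cap\{s,-s\}|=1$ ensuring each parallel pair is used exactly once. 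Your version checks out (the near-one-factorization makes the pair-sets of distinct points on a transversal edge-disjoint, distinct directions never share a pair, and each point lies on only one line of its direction, so all merged cells are nonzero) and is arguably more transparent.

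The genuine gap is the even case, which you explicitly leave open, and it is not a routine extension of either your odd-$k$ scheme or the paper's odd-$k$ recipe. Your fallback (b) --- edge-disjoint matchings $N_0,\dots,N_{k-1}$ in $K_k$ with $N_i$ of size $\frac k2-1$ avoiding vertex $i$ --- cannot be obtained naively from a one-factorization: deleting the edge through a fixed vertex $\infty$ from each of the $k-1$ factors yields only $k-1$ suitable matchings whose union is all of $K_{k-1}$, leaving only edges through $\infty$ for $N_\infty$, which must avoid $\infty$; a redistribution is needed and a uniform construction for all even $k$ still has to be exhibited. The paper instead implements your option (a): each point's $\frac k2-1$ merges are spread over distinct directions, one pair per direction, but the offset set is made to depend on the point, namely $S_{x_0}\subset\F_k$ with $0,1\notin S_{x_0}$, $|S_{x_0}\cap\{s,s+1\}|=1$ for all $s\notin\{0,1\}$, and $s\in S_{x_0}\Rightarrow s\notin S_{x_0+s}$, these sets being built from partitions of $\F_k$ into quadruples $\{z,z+s,z+1,z+s+1\}$. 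The last condition is what replaces $|S\cap\{s,-s\}|=1$ (which degenerates in characteristic $2$) and guarantees that each pair of parallel lines is merged at most once. Without this, or an equivalent packing argument for even $k$, your proof establishes the claim only for odd prime powers.
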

\begin{proof}
 For $n=k^2$,
 $\widetilde P$ turns to $\frac{k^2-1}{2 k}$,
 and
 $\lfloor \widetilde P \rfloor =
 \lfloor \frac{k}{2} - \frac{1}{2 k} \rfloor =
 \lfloor \frac{k-1}{2}\rfloor$.
 At the same time,
 $\frac{(n-1)(n-k)}{k(k-1)}
 = k^2-1$.
 We see that
 (iii') turns to
 $q \ge k^2 - \lfloor \frac{k-1}{2}\rfloor$.

 Next, we are going to construct a
 $\param[k^2+k]{k^2}{k^2-k}{k^2-1}{q}$ code
 such that $q= k^2  - \lfloor \frac{k-1}{2}\rfloor$
 and the supports of codewords form an
 $S(2,k,k^2)$ system.
 For the role of the system,
 we take the affine plane over the finite field $\bbF_{k}$
 of order~$k$.
 The points are the pairs from $\bbF_{k}^2$
 (to treat the points as indices,  we identify them with the integers from $[k^2]$, in an arbitrary fixed order);
 the lines (blocks of the $S(2,k,k^2)$)
 are sets of points
 $(x,y)$ satisfying one of the equations
 $x +ay = b$, where $a,b \in \bbF_w$,
 or $y=b$, where $a \in \bbF_w$.
 We want the code matrix to have
 $k^2 \cdot \lfloor \frac{k-1}{2}\rfloor$ collisions
 in total, each column to have
 $ \lfloor \frac{k-1}{2}\rfloor$ pairwise disjoint collisions,
 and each two rows to have no more than one collision. We consider two cases, depending on the parity of $k$.

 \emph{Odd $k$,
 $ \lfloor \frac{k-1}{2}\rfloor = \frac{k-1}{2}$.}
 Let $S$ be a subset of $\bbF_{k}$ of size $\frac{k-1}{2}$ such that
 $0,1 \not \in S$ 
 and 
 $|S \cap \{s,-s\}|=1$
 for every~$s$ from $\bbF_{k}^*$.

For each $x_0$, $y_0$ from $\bbF_{k} $ and $s$ from $S$,
we make a collision in  the column corresponding to the point $(x_0,y_0)$ and
 the lines
 \begin{eqnarray}\label{eq:line1}
 &&\{\,(x,y) : \, x+(y_0+s)y = x_0+(y_0+s)y_0 - 1\,\}, \\
 \label{eq:line2}
 &&\{\,(x,y) : \, x+(y_0+s)y = x_0+(y_0+s)y_0 - 1+s\,\}.
 \end{eqnarray}
We now check the required properties.
\begin{itemize}
 \item
First of all,
we see that the point  $(x_0,y_0)$ does not lie
on any of the two lines \eqref{eq:line1}, \eqref{eq:line2};
hence the code matrix has nonzeros
in the corresponding two cells and we can make a collision there.
 \item Next, each point corresponds to $|S|=\lfloor \frac{k-1}{2}\rfloor$ collisions
and the collisions are pairwise disjoint
(for different $s$, the coefficients at $y$ are different, and hence
the lines are different and the corresponding rows are different too).
So, after making all these collisions, the number of symbols
used in each column becomes $k^2 - \lfloor \frac{k-1}{2}\rfloor$,
as required.
 \item Finally, for any two parallel lines
$x+ay=b$ and $x+ay = c$,
we uniquely find $s$
from $s\in\{b-c,c-b\}$, then $y_0$ from $y_0+s=a$,
and $x_0$ from~$c$ or~$b$.
This means that the rows corresponding to
such two lines have exactly one collision.
\end{itemize}
Additionally, any two intersecting lines have no collisions,
and we conclude that the code distance is $k^2 - 1$, as required.

\emph{Even $k$, $ \lfloor \frac{k-1}{2}\rfloor = \frac k2 -1$.}
 Let $S_{x_0}$, $S_{x_0}\subset \bbF_{k}$, be defined for each $x_0$ from~$\bbF_{k}$
 in such a way that:
 \begin{itemize}
  \item[(i)] $0,1 \not \in S_{x_0}$,
  \item[(ii)] for every $s$ from  $\bbF_{k}\backslash\{0,1\}$,
               it holds $|S_{x_0} \cap \{s, s+1\}|=1$,
  \item[(iii)] if $s \in S_{x_0}$, then $s \not\in S_{x_0+s}$.
 \end{itemize}
 Let us show that such  $S_{x_0}$, $x_0\in\bbF_{k}$, exists.
 For every pair $\{s,s+1\}$, $s\in \bbF_{k}\backslash\{0,1\}$,
 we divide $\bbF_{k}$ into $k/4$ quadruples of form
 $\{ z, z+s, z+1, z+s+1 \}$.
 Then, for each such quadruple, we include
 $s$ in $S_{z}$ and $S_{z+1}$,
 but not in $S_{z+s}$ or $S_{z+s+1}$,
 and include
 $s+1$ in $S_{z+s}$ and $S_{z+s+1}$,
 but not in $S_{z}$ or $S_{z+1}$.
 This rule depends on the choice of the representative $z$
 in each quadruple, but once the representatives are chosen,
 it uniquely specifies for each $x_0$, which element of
 $\bbF_{k}\backslash\{0,1\}$ belongs to $S_{x_0}$ and
 which element does not belong. Moreover, we see that
 (ii) and (iii) are satisfied automatically 
 and we can require~(i).

Now we proceed similarly to the case of odd~$k$,
but with $S_{x_0}$ instead of~$S$.
For each $x_0$, $y_0$ from $\bbF_{k} $ and $s$ from $S_{x_0}$,
we make a collision in  the column corresponding
to the point $(x_0,y_0)$ and lines
\eqref{eq:line1}, \eqref{eq:line2}.
The properties are also checked similarly,
except the last one, verifying of which differs a bit:
\begin{itemize}
   \item For any two parallel lines
$x+ay=b$ and $x+ay = c$,
we uniquely find $s$ from $s = b-c$,
then $y_0$ from $y_0+s=a$ and finally
$x_0$ from $ x_0+(y_0+s)y_0 - 1\in\{c,b\}$ and (iii).
\end{itemize}
\end{proof}

\begin{remark}
We did not make any collisions with lines of form $y=b$.
Potentially, they can also be used to decrease the number
of used symbols in some columns, but it is not enough
to decrease the alphabet in all columns.
\end{remark}

Another special case 
$S(2,k,k^2-k+1)$ is known as \emph{projective plane} of order $k-1$.
In a projective plane,
every two blocks (lines) have a common point.
It follows that in  a column of the code matrix all nonzero
symbols are distinct (alternatively,
one can see from Corollary~\ref{p:2''} 
that $P=n \widetilde P = 0$ if $n=k^2-k+1$).
\begin{corollary} \label{c:proj}
If there exists a projective plane of order~$s$, then
 $$ q''_0(2,s+1,s^2+s+1) = s^2+1.$$
\end{corollary}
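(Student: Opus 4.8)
The plan is to prove the equality by matching the trivial upper bound from Corollary~\ref{c1} with the lower bound supplied by Corollary~\ref{p:2''}, both of which turn out to be governed by the single quantity~$R$. First I would record the parameters: a projective plane of order~$s$ is precisely a Steiner system $S(2,s+1,s^2+s+1)$, so here $t=2$, $k=s+1$, $n=s^2+s+1$, the codeword weight is $w=n-k=s^2$, and by Lemma~\ref{l2} the number of codewords is $M=\binom{n}{2}/\binom{k}{2}=s^2+s+1$, with each point lying on $\binom{n-1}{1}/\binom{k-1}{1}=s+1$ lines.

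For the upper bound I would simply evaluate the trivial bound of Corollary~\ref{c1}. Since $\frac{\binom{n-1}{1}}{\binom{k-1}{1}}=s+1$ and $\frac{\binom{n}{2}}{\binom{k}{2}}=s^2+s+1$, the bound reads $q''_0(2,s+1,s^2+s+1)\le (s^2+s+1)-(s+1)+1=s^2+1$. Equivalently, this is $R+1$, where $R=s^2$ is the number of nonzeros in each column of the code matrix.

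For the lower bound I would invoke Corollary~\ref{p:2''}. Substituting $n=s^2+s+1$ and $k=s+1$ gives $R=\frac{(n-1)(n-k)}{k(k-1)}=s^2$, and the key computation is that $n-k^2=-s$, which makes the bracket $(n-k^2)(n-1)+k(k-1)^2$ vanish, so $\widetilde P=0$. Then Proposition~\ref{q''}(iii') yields $q-1\ge R-\lfloor\widetilde P\rfloor=s^2$, i.e.\ $q''_0\ge s^2+1$, and combining the two bounds gives equality. The conceptual content behind $\widetilde P=0$ — and the one step worth emphasizing — is that in a projective plane any two distinct lines meet in exactly one point, so any two distinct codewords share exactly one zero position; by the collision count in Proposition~\ref{q''}, such a pair of blocks (intersecting in $t-1=1$ point) admits at most $t-1-i=0$ collisions. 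Hence no nonzero symbol may repeat within a column, the $s^2$ nonzeros of each column must be pairwise distinct, and $q-1\ge s^2$ follows directly, independently of the averaging argument.

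Since there is essentially no slack to close, there is no real obstacle here beyond the two routine binomial evaluations; the only point requiring care is verifying that the bracketed numerator in $\widetilde P$ is exactly zero (equivalently, that no collisions are possible), which is what forces the lower bound to coincide with the trivial upper bound and pins $q''_0$ to $s^2+1$.
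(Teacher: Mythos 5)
Your proposal is correct and follows essentially the same route as the paper: the upper bound is the trivial one from Corollary~\ref{c1}, and the lower bound comes from the fact that any two lines of a projective plane meet, so no collisions are possible and all $R=s^2$ nonzeros in a column must be distinct (equivalently, $\widetilde P=0$ in Corollary~\ref{p:2''}). The paper states exactly these two observations, offering the direct ``two lines always meet'' argument and the $P=n\widetilde P=0$ computation as alternatives, just as you do.
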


\section{Conclusion}
In this paper, we explore two new classes of
diameter perfect constant-weight nonbinary codes
and
estimated the smallest values $q'_0$ and $q''_0$
of the alphabet size for which such codes exist.
Further studying these values, for special parameters,
is related to some problems in the design theory, regarding some weak form of resolvability.
Below, we formulate those problems in the form of conjectures.
Another interesting connection of weakly resolvable designs
with optimal non-binary constant-weight codes
is shown in~\cite{BZL:2022}.

It is well known that resolvable STS$(v)$ exist if and only if 
$v\equiv 3\bmod 6$ \cite{RChWil:kirkman}.
To show that $q'_0(3,4,n)=n/2$,
(Corollary~\ref{c:nw3}) we need the following:
\begin{conjecture}\label{conj:4mod6}
 If $n \equiv 4 \bmod 6$, then there is an SQS$(n)$
 such that all its derived STS are resolvable.
\end{conjecture}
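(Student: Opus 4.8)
The plan is to prove the statement by induction over the residue class $n\equiv 4 \bmod 6$, carrying the extra property ``every derived STS is resolvable'' through the standard recursive constructions for Steiner quadruple systems. First I would record the basic reduction: since $n\equiv 4\bmod 6$ forces $n-1\equiv 3\bmod 6$, each derived system has the order admissible for a resolvable (Kirkman) triple system, so the real obstruction is not the existence of individual resolutions but their \emph{simultaneous} availability at all $n$ points of one SQS$(n)$. I would then split the class into $n\equiv 4\bmod 12$ and $n\equiv 10\bmod 12$. For $n\equiv 4\bmod 12$ the target is nearly in hand: recall that all derived systems of a $2$-resolvable SQS are resolvable, and $2$-resolvable SQS$(n)$ are known for $n=4^m$ \cite{Baker:1976}, for $n=2p^m+2$ \cite{Teirlinck:94}, and for all large $n\equiv 4\bmod 12$ \cite{Keevash:2014}; since our property is strictly weaker than $2$-resolvability, the finitely many remaining small orders should be settled by a direct or computer search. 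The genuine content is therefore $n\equiv 10\bmod 12$, where $S(2,4,n)$ does not exist and hence no SQS$(n)$ can be $2$-resolvable.

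For $n\equiv 10\bmod 12$ I would build the systems recursively while staying inside $n\equiv 4\bmod 6$. The relevant multipliers are $v\mapsto 4v$ (which always lands in $\equiv 4\bmod 12$) and a tripling of type $v\mapsto 3v-2$, which interchanges the residues $4$ and $10$ modulo $12$, since $3(12k+4)-2=36k+10$ and $3(12k+10)-2=36k+28\equiv 4\bmod12$. A single such construction produces only an arithmetic subprogression of $10\bmod 12$ (the values $\equiv 10\bmod 36$), so the aim is to verify the whole toolbox of Hanani-type recursive SQS constructions — doubling, tripling, and the group-divisible/$H$-design constructions — each with the additional guarantee that the derived resolutions glue. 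Seeding the induction with the base case $n=10$ (known to have all derived STS resolvable) together with the $4\bmod 12$ systems produced above, one would then propagate the property across the entire class.

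The step I expect to be the main obstacle is exactly this gluing: resolvability of a derived STS is a \emph{global} property of that system, whereas the recursive constructions assemble the blocks of SQS$(N)$ from blocks living on several smaller point sets (ingredient SQS's, group-divisible designs, and ``filling'' designs). To inherit the property one must, for \emph{every} point $x$ of the large system at once, build a single partition of $[N]\setminus\{x\}$ into triples out of the local resolutions of the ingredients, which forces the ingredient resolutions to be compatible along shared groups and transversals. Finding the right self-reproducing refinement of the hypothesis — a ``resolvable-derived-with-compatible-parallelism'' object that is closed under the chosen constructions — is the heart of the difficulty, and I expect it to require a more rigid auxiliary structure (for instance a parallelism indexed by the points, or a suitable doubly resolvable enhancement) than the bare SQS.

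As an alternative route for the large orders I would attempt to invoke the iterative-absorption existence machinery (in the spirit of \cite{Keevash:2014}) directly, encoding ``all derived STS resolvable'' as a family of local colouring/completion constraints and seeking an SQS$(n)$ satisfying them for all large $n\equiv 4\bmod 6$, thereby reducing the conjecture to finitely many small orders to be cleared computationally. The catch is that these theorems are designed to prescribe sub-configurations rather than a resolution of every link, so one would first need an analogue guaranteeing a global resolution of each derived system; establishing such an analogue, or reducing the resolution requirement to a prescribed-subdesign condition, is the crux of this second approach.
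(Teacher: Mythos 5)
This statement appears in the paper as a \emph{conjecture}, not a theorem: the authors explicitly record that it is only known to hold for $n=10$ and for those orders where a $2$-resolvable SQS$(n)$ is known to exist ($n=4^m$, $n=2\cdot p^m+2$ with $p\in\{7,31,127\}$, and sufficiently large $n\equiv 4\bmod 12$ via Keevash). So there is no proof in the paper to compare against, and your text is not a proof either --- it is a research programme whose two central steps you yourself flag as unresolved. That is the genuine gap, and it is not a small one.

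Concretely: (a) for $n\equiv 10\bmod 12$ no $S(2,4,n)$ exists, so $2$-resolvability is unavailable and the only known instance of the conjectured property is $n=10$; your plan to push it through Hanani-type doubling/tripling recursions founders exactly where you say it does. A derived STS of the output SQS$(N)$ at a point $x$ is assembled from blocks living on several ingredient point sets, and no self-reproducing strengthening of ``all derived STS resolvable'' that survives these constructions has been exhibited; until such an auxiliary object is defined and verified for a recursion covering all of $10\bmod 12$, nothing is proved beyond $n=10$. (b) Even within the residue class $4\bmod 12$, your reduction to ``finitely many remaining small orders settled by direct or computer search'' is not actionable: Keevash's threshold is not explicit and is far beyond computational reach, so that class is not actually covered either. (c) The alternative route via iterative absorption would require a version of the existence machinery that guarantees a global resolution of \emph{every} link of the design simultaneously, which, as you note, is not what the current theorems deliver. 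In short, your proposal accurately maps the difficulty --- the residue split, the role of $2$-resolvability, the base case $n=10$, and the gluing obstruction all match the paper's discussion --- but it does not close the problem; the statement remains open, exactly as the paper presents it.
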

Up to now, we only know that the conjecture is true
for $n=10$ and
in the cases
when $2$-resolvable SQS$(n)$ are known to exist:
for $n=4^m$~\cite{Baker:1976};
for $n=2\cdot p^m+2$, $p\in\{7,31,127\}$~\cite{Teirlinck:94};
for sufficiently large $n$, $n\equiv 4 \bmod 12$~\cite{Keevash:2014}.

If $v \equiv 1 \bmod 6$, then an STS$(v)$ cannot be resolvable,
and the number of mutually disjoint blocks in
such a system~$S'$ does not exceed $(v-1)/3$.
So, $\chi(S') \ge \lceil\frac{v(v-1)/6}{(v-1)/3}\rceil = \frac{v+1}2$.
\begin{conjecture}\label{conj:2mod6}
 If $20 \le n \equiv 2 \bmod 6$, then there is an SQS$(n)$
 such that all its derived STS have chromatic number $n/2$,
 i.e., can be partitioned into $n/2$ partial parallel classes
 (equivalently, $q'_0(3,4,n)=n/2+1$).
\end{conjecture}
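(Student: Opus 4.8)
The plan is to reduce the statement, via Theorem~\ref{t3} with $w=4$, to the purely design-theoretic task of exhibiting, for each admissible $n$, a single SQS$(n)$ all of whose derived STS$(n-1)$ can be partitioned into exactly $n/2$ partial parallel classes. The lower bound $\chi(D)\ge n/2$ for every derived $D=\mathrm{STS}(n-1)$ comes for free: since $n-1\equiv 1\bmod 6$, no derived system is resolvable, a partial parallel class consists of at most $(n-2)/3$ pairwise disjoint blocks, and dividing the block count $(n-1)(n-2)/6$ by $(n-2)/3$ and rounding up gives $n/2$ (this is the pigeonhole bound recorded just before Conjecture~\ref{conj:2mod6}). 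Thus everything hinges on the matching upper bound $\chi(D)\le n/2$ holding for all $n$ derived systems simultaneously.

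First I would make the target concrete by describing the extremal partition I aim for. A convenient way to realize $n/2$ classes is to take $(n-2)/2$ \emph{almost parallel classes}, each a set of $(n-2)/3$ pairwise disjoint blocks covering all but one point of $[n-1]$, together with one short class of $(n-2)/6$ disjoint blocks; an incidence count shows these sizes add up to the full block set. A derived system admitting such a partition is \emph{almost resolvable}, so the conjecture is implied by the existence of an SQS$(n)$ \emph{every} derived STS of which is almost resolvable. The prototype is $n=20$, already settled in Proposition~\ref{p:n,4,3}: the cyclic SQS$(20)$ has, by cyclicity, all derived systems isomorphic, so a single explicit partition of the derived STS$(19)$ into $10$ classes suffices. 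This suggests the route for general~$n$: seek an SQS$(n)$ carrying a point-transitive automorphism group, collapsing the infinitely many derived systems to one, and then build the almost-resolution of that single STS$(n-1)$.

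To lift such a resolution to the SQS, the cleanest sufficient structure is a partition of the block set into $(n-2)/2$ packings $G_1,\dots,G_{(n-2)/2}$ of $K_n$ by $K_4$'s, each whose leave is a perfect matching $M_i$ of $[n]$, together with a remainder $G_0$. Because each $G_i$ is a packing, no two of its blocks share a pair, so deriving $G_i$ at any point yields pairwise disjoint triples; the perfect-matching leave forces exactly one uncovered point (the matching partner of $x$), giving an almost parallel class, while $G_0$ derives to the short class. The $(n-2)/2$ matchings together carry $n(n-2)/4$ pairs, which is precisely the number of pairs covered by $G_0$, so this ``almost $2$-resolvable'' SQS is a well-posed object. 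I would try to produce it by standard recursive SQS constructions chosen to preserve the residue $n\equiv 2\bmod 6$, seeded by algebraic and small base cases and verified to propagate the matching-leave structure; for all sufficiently large admissible~$n$ one might instead invoke Keevash-type existence results~\cite{Keevash:2014} to obtain a highly structured SQS$(n)$ and then correct a bounded resolvability defect by hand, in analogy with the $n\equiv 4\bmod 12$ cases behind Corollary~\ref{c:4mod12}.

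The hard part will be the \emph{simultaneity} together with the congruence obstruction. For $n\equiv 4\bmod 12$ the analogous results rest on genuine $2$-resolvable SQS, i.e., partitions into sub-$S(2,4,n)$'s whose derived classes are full parallel classes; but for $n\equiv 2\bmod 6$ an $S(2,4,n)$ cannot exist, so that machinery is simply unavailable and must be replaced by the weaker matching-leave packings above. Arranging the leaves $M_1,\dots,M_{(n-2)/2}$ and the remainder $G_0$ to fit together consistently \emph{at every point at once}, and showing that this global structure survives the recursion or the Keevash fix-up for every residue $n\equiv 2\bmod 6$ (and not merely asymptotically, or for sparse families such as $n=20$), is exactly what keeps the statement a conjecture. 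I expect this global coordination across all $n$ derived systems, rather than any single local computation, to be the principal obstacle.
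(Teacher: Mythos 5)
The statement you were asked to prove is Conjecture~\ref{conj:2mod6}: the paper offers no proof of it at all --- it is posed as an open problem in the Conclusion, supported only by the pigeonhole lower bound that you correctly reproduce (for $v=n-1\equiv 1\bmod 6$ an STS$(v)$ has no parallel class, a partial parallel class has at most $(v-1)/3$ blocks, so $\chi(S')\ge\lceil\frac{v(v-1)/6}{(v-1)/3}\rceil=\frac{v+1}{2}=n/2$, whence $q'_0(3,4,n)\ge n/2+1$ via Theorem~\ref{t3}), and by the single settled instance $n=20$ from Proposition~\ref{p:n,4,3}. Your reduction through Theorem~\ref{t3}, your arithmetic for the shape of an extremal partition ($(n-2)/2$ almost parallel classes of size $(n-2)/3$ plus one short class of $(n-2)/6$ blocks, which indeed sum to $(n-1)(n-2)/6$ blocks in $n/2$ classes), and your identification of the congruence obstruction (no $S(2,4,n)$ exists for $n\equiv 2\bmod 6$, so the genuine $2$-resolvability machinery behind the $n\equiv 4\bmod 12$ cases of Corollary~\ref{c:4mod12} is unavailable) are all correct and consistent with the paper's framing.

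But what you have written is a research program, not a proof, and you concede as much in your last paragraph. The entire content of the conjecture is the existence, for \emph{every} $n\equiv 2\bmod 6$, $n\ge 20$, of an SQS$(n)$ with the global structure you describe --- a partition into $(n-2)/2$ packings $G_1,\dots,G_{(n-2)/2}$ with perfect-matching leaves plus a remainder $G_0$ --- and neither the recursive constructions nor the Keevash-type fix-up you gesture at is carried out for even one new value of~$n$; the known $S(t,k,n)$ existence results give no control over this kind of simultaneous near-resolvability. One substantive point your sketch also leaves unstated: since each pair of points lies in $(n-2)/2$ blocks of the SQS and each $G_i$ covers a pair exactly once unless it lies in the leave $M_i$, the multiplicity of a pair in $G_0$ equals the number of leaves containing it; hence for the short class derived from $G_0$ at a point $x$ to consist of \emph{pairwise disjoint} triples you need every pair to lie in at most one of $M_1,\dots,M_{(n-2)/2}$ (equivalently, $G_0$ must itself be a packing), an extra global constraint that your counting identity $\sum_i|M_i|=n(n-2)/4$ alone does not guarantee. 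So the verdict is: correct equivalence and correct supporting computations, a plausible structural target, but the statement remains unproven --- exactly as it does in the paper.
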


\subsection*{Declaration of competing interest}
The authors declare that they have no known competing financial interests or personal
relationships that could have appeared to influence the work reported in this paper.
\subsection*{Data availability}
No data was used for the research described in the article.

%==========================================================

% \bibliographystyle{plain}
% \bibliography{../../k}
% \end{document}

\providecommand\href[2]{#2} \providecommand\url[1]{\href{#1}{#1}}
  \def\DOI#1{{\small {DOI}:
  \href{http://dx.doi.org/#1}{#1}}}\def\DOIURL#1#2{{\small{DOI}:
  \href{http://dx.doi.org/#2}{#1}}}

\end{document}